\begin{document}
\pagestyle{plain}

\title{Joint Transmitter and Receiver Energy Minimization in Multiuser OFDM Systems
\thanks{S. Luo and T. J. Lim are with the Department of
Electrical and Computer Engineering, National University of
Singapore (e-mail:\{shixin.luo, eleltj\}@nus.edu.sg).} \thanks{R.
Zhang is with the Department of Electrical and Computer Engineering,
National University of Singapore (e-mail:elezhang@nus.edu.sg). He is
also with the Institute for Infocomm Research, A*STAR, Singapore.}}

\author{Shixin Luo, Rui
Zhang, and Teng Joon Lim}

\maketitle

\begin{abstract}
In this paper, we formulate and solve a weighted-sum transmitter and receiver energy minimization (WSTREMin) problem in the downlink of an orthogonal frequency division multiplexing (OFDM) based multiuser wireless system. The proposed approach offers the flexibility of assigning different levels of importance to base station (BS) and mobile terminal (MT) power consumption, corresponding to the BS being connected to the grid and the MT relying on batteries. To obtain insights into the characteristics of the problem, we first consider two extreme cases separately, i.e., weighted-sum receiver-side energy minimization (WSREMin) for MTs and transmitter-side energy minimization (TEMin) for the BS. It is shown that Dynamic TDMA (D-TDMA), where MTs are scheduled for single-user OFDM transmissions over orthogonal time slots, is the optimal transmission strategy for WSREMin at MTs, while OFDMA is optimal for TEMin at the BS. As a hybrid of the two extreme cases, we further propose a new multiple access scheme, i.e., Time-Slotted OFDMA (TS-OFDMA) scheme, in which MTs are grouped into orthogonal time slots with OFDMA applied to users assigned within the same slot. TS-OFDMA can be shown to include both D-TDMA and OFDMA as special cases. Numerical results confirm that the proposed schemes enable a flexible range of energy consumption tradeoffs between the BS and MTs.
\end{abstract}

\begin{keywords}
Energy efficiency, green communication, OFDMA, TDMA, convex optimization.
\end{keywords}

\IEEEpeerreviewmaketitle
\setlength{\baselineskip}{1.3\baselineskip}
\newtheorem{definition}{\underline{Definition}}[section]
\newtheorem{fact}{Fact}
\newtheorem{assumption}{Assumption}
\newtheorem{theorem}{\underline{Theorem}}[section]
\newtheorem{lemma}{\underline{Lemma}}[section]
\newtheorem{corollary}{Corollary}
\newtheorem{proposition}{\underline{Proposition}}[section]
\newtheorem{example}{\underline{Example}}[section]
\newtheorem{remark}{\underline{Remark}}[section]
\newtheorem{algorithm}{\underline{Algorithm}}[section]
\newcommand{\mv}[1]{\mbox{\boldmath{$ #1 $}}}

\section{Introduction}\label{sec:introduction}
The range of mobile services available to consumers and businesses is growing rapidly, along with the range of devices used to access these services. Such heterogeneity in both hardware and traffic requirements requires maximum flexibility in all layers of the protocol stack, starting with the physical layer (PHY). Orthogonal frequency division multiple access (OFDMA), which is based on multi-carrier transmission and enables low-complexity equalization of the inter-symbol interference (ISI) caused by frequency selective channels, is one promising PHY solution and has been adopted in various wireless communication standards, e.g., WiMAX and 3GPP LTE \cite{3GPP}. However, the complexity of OFDMA and other features that enable heterogeneous high-rate services leads to increased energy consumption, and hence increased greenhouse gas emissions and operational expenditure. Green radio (GR), which emphasizes improvement in energy efficiency (EE) in bits/joule rather than spectral efficiency (SE) in bits/sec/Hz in wireless networks, has thus become increasingly important and has attracted widespread interest recently \cite{tutorial}.

Prior to the relatively recent emphasis on EE, the research on OFDMA based wireless networks has mainly focused on dynamic resource allocation, which includes dynamic subcarrier (SC) and power allocation, and/or data rate adaptation, for the purposes of either maximizing the throughput \cite{YinLiu00, Song05, KBLee03, Varaiya03} or minimizing the transmit power \cite{Wong99, Cioffi06}. The authors in \cite{Wong99} first considered the problem of power minimization in OFDMA, through adaptive SC and power allocation, subject to transmit power and MTs' individual rate constraints. A time sharing factor, taking values within the interval $[0,1]$, was introduced to relax the original problem to a convex problem, which can then be efficiently solved. The throughput maximization problem for OFDMA can be more generally formulated as a utility maximization problem \cite{Song05}. For example, if the utility function is the network sum-throughput itself, then the maximum value is achieved with each SC being assigned to the MT with the largest channel gain together with the water-filling power allocation over SCs \cite{KBLee03}. This work has been extended to the case of rate proportional fair scheduling in \cite{Varaiya03, Andrew05}. The Lagrange dual decomposition method \cite{Boyd2} was proposed in \cite{Cioffi06} to provide an efficient algorithm for solving OFDMA based resource allocation problems. Although there has been no proof yet for the optimality of the solution by the dual decomposition method, it was shown in \cite{Cioffi06} that with a practical number of SCs, the duality gap is virtually zero.

Recently, there has been an upsurge of interest in EE optimization for OFDMA based networks \cite{Bormann08,Xiong11, Xiong12,Fettweis10,Jianhua13}. Since energy scarcity is more severe at mobile terminals (MTs), due to the limited capacity of batteries, energy-efficient design for OFDMA networks was first considered under the uplink setup \cite{Bormann08}. The sum of MTs' individual EEs, each defined as the ratio of the achievable rate to the corresponding MT's power consumption, is maximized considering both the circuit and transmit power (termed the total power consumption in the sequel). EE maximization for OFDMA downlink transmissions has been studied in \cite{Xiong11, Xiong12,Jianhua13,Fettweis10}. A generalized EE, i.e., the weighted-sum rate divided by the total power consumption, was maximized in \cite{Xiong12} under prescribed user rate constraints. Instead of modeling circuit power as a constant, the authors in \cite{Xiong11, Fettweis10} proposed a model of rate-dependent circuit power, in the context of EE maximization, since larger circuit power is generally required to support a higher data rate.

It is worth noting that most of the existing work on EE-based resource allocation for OFDMA has only considered transmitter-side energy consumption. However, in an OFDMA downlink, energy consumption at the receivers of MTs is also an important issue given the limited power supply of MTs. Therefore, it is interesting to design resource allocation schemes that prolong the operation time of MTs by minimizing their energy usage. Since the energy consumption at the receivers is roughly independent of the data rate and merely dependent on the active time of the MT \cite{Veciana10}, the dominant circuit power consumption at MTs should be considered. Consequently, fast transmission is more beneficial for reducing the circuit energy consumption at the receivers. A similar idea has also been employed in a recent work \cite{Fettweis12}.

In this paper, we propose to characterize the tradeoffs in minimizing the BS's versus MTs' energy consumption in multiuser OFDM based downlink transmission by investigating a weighted-sum transmitter and receiver joint energy minimization (WSTREMin) problem, subject to the given transmission power constraint at the BS and data requirements of individual MTs. We assume that each SC can only be allocated to one MT at each time, but can be shared among different MTs over time, a channel allocation scheme that we refer to as SC time sharing. Therefore, optimal transmission scheduling at the BS involves determining the time sharing factors and the transmit power allocations over the SCs for all MTs.

To obtain useful insights into the optimal energy consumption for the BS and MTs, we first consider two extreme cases separately, i.e., the weighted-sum receiver-side energy minimization (WSREMin) for MTs and transmitter-side energy minimization (TEMin) for the BS. It is shown that Dynamic TDMA (D-TDMA) as illustrated in Fig. \ref{fig:Schemeexample}(a), where MTs are scheduled in orthogonal time slots for transmission, is the optimal strategy for WSREMin at MTs. Intuitively, this is because D-TDMA minimizes the receiving time of individual MTs given their data requirements. In contrast, OFDMA as shown in Fig. \ref{fig:Schemeexample}(b) is proven to be optimal for TEMin at the BS. It is observed that transmitter-side energy and weighted-sum receive energy consumptions cannot be minimized at the same time in general due to different optimal transmission schemes, and there exists a tradeoff between the energy consumption of the BS and MTs. To obtain more flexible energy consumption tradeoffs between the BS and MTs for WSTREMin and inspired by the results from the two extreme cases, we further propose a new multiple access scheme, i.e., Time-Slotted OFDMA (TS-OFDMA) scheme as illustrated in Fig. \ref{fig:Schemeexample}(c), in which MTs are grouped into orthogonal time slots with OFDMA applied when multiple users are assigned to the same time slot. TS-OFDMA can be shown to include both the D-TDMA and OFDMA as special cases.

\begin{figure}
\centering
\epsfxsize=0.7\linewidth
    \includegraphics[width=9cm]{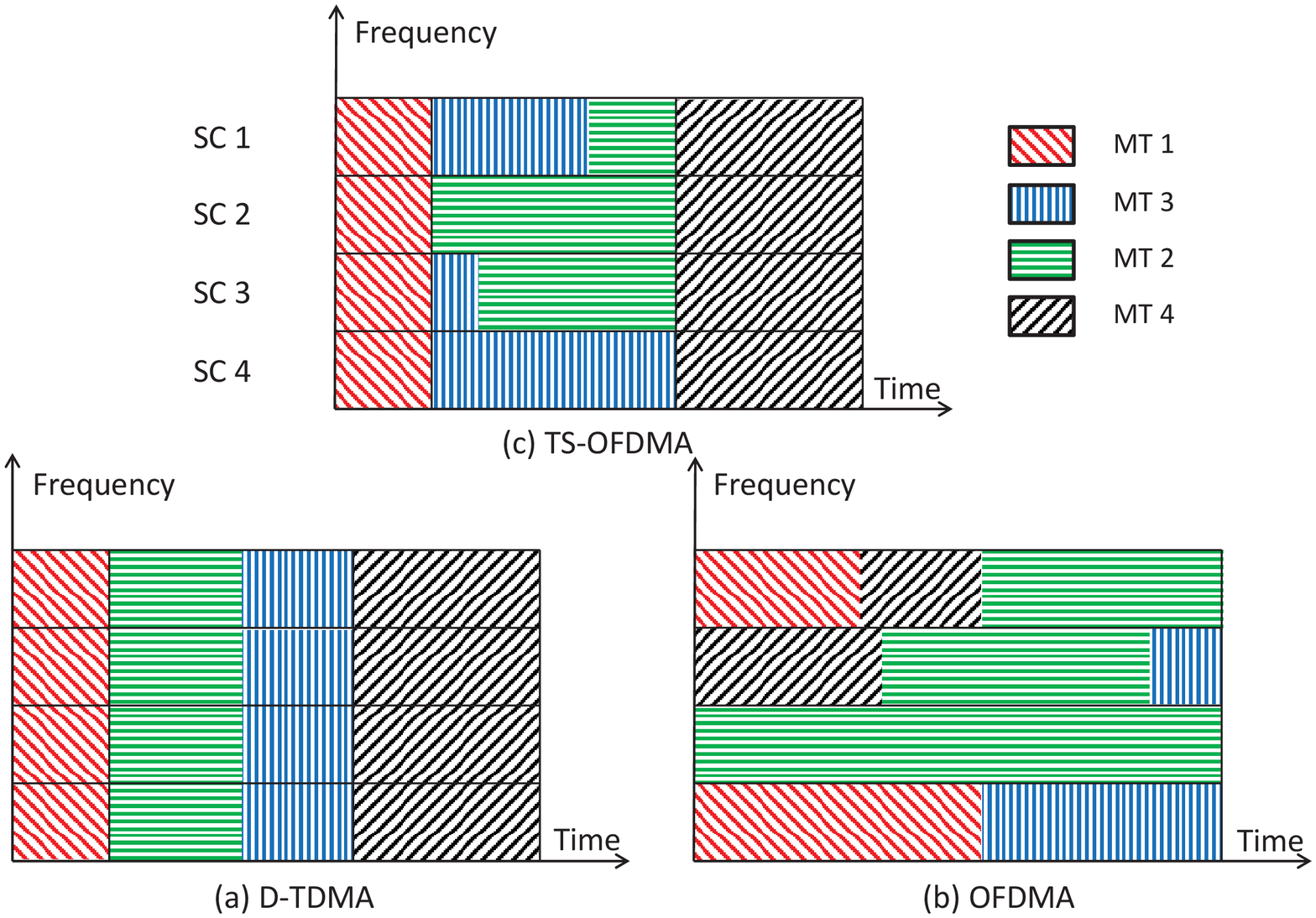}
\caption{Transmission schemes: (a) Dynamic TDMA (D-TDMA); (b) OFDMA; and (c) Time-Slotted OFDMA (TS-OFDMA).} \label{fig:Schemeexample}
\end{figure}

The rest of this paper is organized as follows. Section \ref{sec:system model} introduces the multiuser OFDM based downlink system model, and the power consumption models for the BS and MTs. Section \ref{sec:WSRE} and Section \ref{sec:TE} then study the two extreme cases of WSREMin and TEMin, respectively. Section \ref{sec:WSE} introduces the general WSTREMin problem and proposes the TS-OFDMA transmission scheme to achieve various energy consumption tradeoffs between the BS and MTs. {\color{red}In Section \ref{sec:delay}, we discuss how the obtained results can be extended to the case when a maximum time constraint is imposed on the transmission.} Section \ref{sec:numerical} shows numerical results. Finally Section \ref{sec:conclusion} concludes the paper.

\section{System Model and Problem Formulation}\label{sec:system model}
\subsection{System Model}
Consider a multiuser OFDM-based downlink transmission system consisting of one BS, $N$ orthogonal subcarriers (SCs) each with a bandwidth of $W$ Hz, and $K$ MTs. Let $\mathcal{K}$ and $\mathcal{N}$ denote the sets of MTs and SCs, respectively. We assume that each SC can be assigned to at most one MT at any given time, but the SC assignment is allowed to be shared among MTs over time, i.e., SC time sharing. We also assume that the noise at the receiver of each MT is modeled by an additive white Gaussian noise (AWGN) with one-sided power spectrum density denoted by $N_0$. Let $p_{k,n}$ be the transmit power allocated to MT $k$ in SC $n$, $k \in \mathcal{K}$, $n \in \mathcal{N}$, and $r_{k,n}$ be the achievable rate of MT $k$ at SC $n$ in the downlink. Then it follows that
\begin{align}\label{eq:rate on each subcarrier}
r_{k,n} = W\log_2\left(1 + \frac{h_{k,n}p_{k,n}}{\Gamma N_0W}\right)
\end{align}
where $\Gamma \geq 1$ accounts for the gap from the channel capacity due to practical modulation and coding, and $h_{k,n}$ is the channel power gain from the BS to MT $k$ at SC $n$, which is assumed to be perfectly known at both the BS and MT $k$.

With time sharing of SCs among MTs, $\rho_{k,n}$, dubbed the time sharing factor, is introduced to represent the fraction of time that SC $n$ is assigned to MT $k$, where $0 \leq \rho_{k,n} \leq 1$, $\forall k, n$ and $\sum^{K}_{k=1}\rho_{k,n} \leq 1$, $\forall n$. Let $T$ denote the total transmission time for our proposed scheduling. The amount of information bits delivered to MT $k$ over time $T$ is thus given by
\begin{align}\label{eq:received data by each user}
Q_k = T\sum^{N}_{n=1}\rho_{k,n}r_{k,n}.
\end{align}
The average transmit power is given by
\begin{align}\label{eq:average power}
\bar{P} = \sum^{K}_{k=1}\sum^{N}_{n=1}\rho_{k,n}p_{k,n}.
\end{align}
We assume that $\bar{Q}_k$ bits of data need to be delivered from the BS to MT $k$ over a slot duration $T$ for the time slot of interest. Then the following constraint must be satisfied:
\begin{align}\label{eq:QoS constraint for each user}
Q_k \geq \bar{Q}_k, \forall k \in {\cal K}.
\end{align}

We further assume that the receiver of each MT is turned on only when the BS starts to send the data it requires, which can be at any time within the time slot, and that it is turned off right after all $\bar{Q}_k$ bits of data are received. Let $t_k$, $0 \leq t_k\leq T$, denote the ``on'' period of MT $k$. It is observed that the following inequalities must hold for all MTs:
\begin{align}\label{eq:lower bound of on time}
\max\limits_n \{T\rho_{k,n}\} \leq t_k \leq T, \forall k \in {\cal K}.
\end{align}
The origin of this inequality can be understood from Fig. \ref{fig:SystemModel}, where MT $k$ is turned on and then off within the time interval $T$.
\begin{figure}
\centering
\epsfxsize=0.7\linewidth
    \includegraphics[width=9cm]{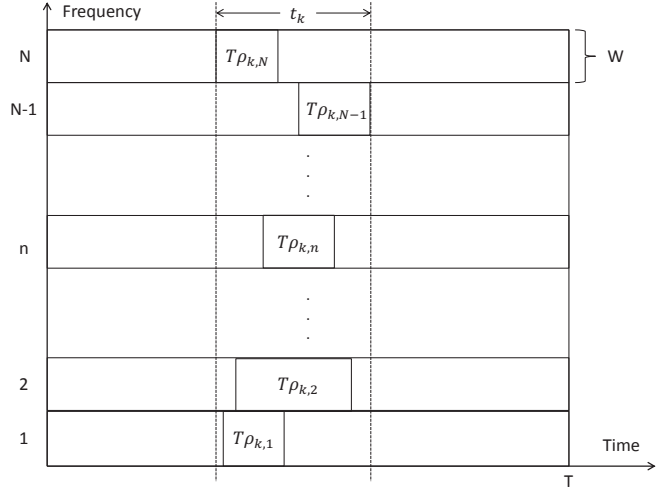}
\caption{Multiuser OFDM transmission with SC time sharing.} \label{fig:SystemModel}
\end{figure}

Energy consumption at the BS in general comprises two major parts: transmit power $\bar{P}$ and a constant power $P_{t,c}$ accounting for all non-transmission related energy consumption due to e.g. processing circuits and cooling. Consequently, the total energy consumed by the BS over duration $T$, denoted by $E_t$, can be modeled as
\begin{align}\label{eq:transmitter-side energy}
E_t = T\bar{P} + TP_{t,c}.
\end{align}

On the other hand, the power consumption at the receiver of each MT is assumed to be constant \cite{Veciana10}, denoted by $P_{r,c}$, when it is in the ``on'' period receiving data from the BS. Otherwise, if the receiver does not receive any data from the BS, its consumed power is in general negligibly small and thus is assumed to be zero. Hence, the receiver energy consumed by each MT $k$ over $T$, denoted by $E_{r,k}$, can be approximately modeled as
\begin{align}\label{eq:receiver energy at MT k}
E_{r,k} = P_{r,c}t_k, k \in {\cal K}.
\end{align}
In general, each MT can be in a different state of energy depletion, and thus it is sensible to define a weighted-sum receiver-side energy (WSRE) consumption of all MTs as
\begin{align}\label{eq:weighted energy definition}
E^w_r = \sum^{K}_{k=1}\alpha_kE_{r,k}
\end{align}
where a larger weight $\alpha_k$ reflects the higher priority of MT $k$ in terms of energy minimization.

{\color{red}It is assumed that all channels $h_{k,n}$'s are constant over the total transmission time of a frame, $T$. While in theory the optimal $T$ is unbounded, for a practical number of bits to be transmitted per frame, $\bar{Q}_k$'s, and practical transmit power levels $P_{t,c}$ and $P_{r,c}$, the designed optimal $T$ will be finite and in fact usually quite small. If we consider low-mobility and/or short frame lengths, then the assumption of a static channel over an indeterminate $T$ is valid. However, in Section \ref{sec:delay}, we provide detailed discussions on how the obtained results in this paper can be extended to the case when an explicit maximum transmission time constraint is imposed.}


\subsection{Problem formulation}
We aim to characterize the tradeoffs in minimizing the BS's versus MTs' energy consumption, i.e., $E_t$ versus $E_{r,k}$'s, in multiuser OFDM based downlink transmission by investigating a weighted-sum transmitter and receiver joint energy minimization (WSTREMin) problem, which is formulated as
\begin{align}
&(\text{WSTREMin}): \nonumber \\~&\mathop{\mathtt{Min.}}\limits_{\{p_{k,n}\}, \{\rho_{k,n}\},T}
~~ \sum^{K}_{k=1}\alpha_kt_kP_{r,c} \nonumber \\
&+ \alpha_0\left(\sum^{K}_{k=1}\sum^{N}_{n=1} T\rho_{k,n}p_{k,n} + TP_{t,c}\right)\\
\mathtt{s.t.}
& ~~ \sum^{K}_{k=1}\rho_{k,n} \leq 1, \forall n \label{eq:WSTREMin c1}\\
& ~~ \sum^{N}_{n=1} T\rho_{k,n}r_{k,n} \geq \bar{Q}_k, \forall k \label{eq:WSTREMin c2}\\
& ~~ \sum^{K}_{k=1}\sum^{N}_{n=1} \rho_{k,n}p_{k,n} \leq P_{\text{avg}} \label{eq:WSTREMin c3}\\
& ~~ T > 0, p_{k,n} \geq 0, ~ 0 \leq \rho_{k,n} \leq 1, \forall n, k \label{eq:WSTREMin c4}
\end{align}
where $\alpha_0$ is an additional weight assigned to the BS, which controls the resulting minimum energy consumption of the BS as compared to those of MTs. Notice that the design variables in the above problem include the power allocation $p_{k,n}$, time sharing factor $\rho_{k,n}$, as well as transmission time $T$, while the constraints in (\ref{eq:WSTREMin c1}) are to limit the total transmission time at each SC to be within $T$, those in (\ref{eq:WSTREMin c2}) are for the data requirements of different MTs, and that in (\ref{eq:WSTREMin c3}) specifies the average transmit power at BS, denoted by $P_{\text{avg}}$. The main difficulty in solving problem (WSTREMin) lies in the absence of a functional relationship among $t_k$, $\rho_{k,n}$'s and $T$ with the inequality in (\ref{eq:lower bound of on time}) being the only known expression that links the three variables. Minimizing over the upper bound of each MT's energy consumption, i.e., $TP_{r,c}$, which could be quite loose as illustrated in Fig. \ref{fig:SystemModel}, may result in conservative or energy-inefficient solution.

In order to obtain useful insights into the optimal energy consumption for the BS and MTs, we first consider two extreme cases separately in the following two sections, i.e., WSRE minimization (WSREMin) corresponding to the case of $\alpha_0 = 0$ in Section \ref{sec:WSRE} and transmitter-side energy minimization (TEMin) corresponding to the case of $\alpha_k = 0, \forall k$, respectively, in Section \ref{sec:TE}. Compared with problem (WSTREMin), problems (WSREMin) and (TEMin) have exactly the same set of constraints but different objective functions. We will illustrate how problem (WSTREMin) may be practically solved based on the results from the the two extreme cases in Section \ref{sec:WSE}.

\begin{remark}
Problem (WSTREMin) could have an alternative interpretation by properly setting the energy consumption weights $\alpha_0$ and $\alpha_k$'s. Suppose $\alpha_0$ and $\alpha_k$ represent the unit cost of consumed energy at the BS and MT $k$, respectively. Since MTs are usually powered by capacity limited batteries in comparison to the electrical grid powered BS, $\alpha_0$ and $\alpha_k$'s should reflect the energy price in the market for the BS and the risk of running out of energy for each MT $k$, respectively. With this definition, problem (WSTREMin) can be treated as a network-wide cost minimization problem. How to practically select the values of $\alpha_0$ and $\alpha_k$'s to achieve this end is beyond the scope of this paper.
\end{remark}

\section{Receiver-Side Energy Minimization}\label{sec:WSRE}
In this section, we consider minimizing receiver energy consumption at all MTs without regard for BS energy consumption. From (\ref{eq:receiver energy at MT k}) and (\ref{eq:weighted energy definition}), the WSREMin problem is thus formulated as
\begin{align}
&(\text{WSREMin}): \nonumber \\~&\mathop{\mathtt{Min.}}\limits_{\{p_{k,n}\}, \{\rho_{k,n}\}, T}
~~ \sum^{K}_{k=1}\alpha_kP_{r,c}t_k  \\
\mathtt{s.t.}
& ~~ (\ref{eq:WSTREMin c1}), (\ref{eq:WSTREMin c2}), (\ref{eq:WSTREMin c3}), ~ \text{and} ~(\ref{eq:WSTREMin c4}).
\end{align}
As mentioned in Section \ref{sec:introduction}, receiver-side energy minimization has also been considered in \cite{Fettweis12}, in which the available time-frequency resources are divided into equally spaced RBs over both time and frequency. Flat-fading, i.e., the channels are the same across all the RBs, was assumed for each MT, based on which an integer programme with each MT constrained by the number of required RBs is formulated for RB allocation. Problem (WSREMin), in contrast, assumes a more flexible SC allocation with time sharing factor $\rho_{k,n}$'s to achieve further energy saving. Moreover, the optimal power allocation corresponding to frequency selective channels is obtained.

Similar to problem (WSTREMin), the main difficulty in solving problem (WSREMin) lies in the absence of a functional relationship among $t_k$, $\rho_{k,n}$'s and $T$. However, it can be shown that a dynamic TDMA (D-TDMA) based solution, i.e., MTs are scheduled for single-user OFDM transmission over orthogonal slots with respective duration $\rho_kT$, $k=1,\cdots,K$, with $\sum^{K}_{k=1}\rho_{k} \leq 1$, is optimal for problem (WSREMin), as given in the following proposition.
\begin{proposition}\label{proposition:1}
Let $\rho^{*}_{k,n}$, $n=1,\cdots,N$, and $t^{*}_k$ denote the optimal set of time sharing factors and the optimal ``on'' period for MT $k$, respectively, $k \in \mathcal{K}$, in problem (WSREMin). Then, we have
\begin{align}
\rho^{*}_{k,n} & = \rho^{*}_{k}, ~~ \forall n \in \mathcal{N}, k \in \mathcal{K} \\
t^{*}_k & = T\rho^{*}_{k}, \forall k \in \mathcal{K} \label{eq:TDMA statement}
\end{align}
where $\rho^{*}_{k}$ denotes the common value of all $\rho^{*}_{k,n}$, $\forall n \in \mathcal{N}$, for MT $k$.
\end{proposition}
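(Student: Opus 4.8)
The plan is to show that \emph{any} feasible point of (WSREMin) can be converted, without increasing the objective, into one of the claimed D-TDMA form in which each MT occupies every SC for the same fraction of the frame; an optimal solution of that form then exists, and (since all $\alpha_k>0$) every optimal solution can be taken to have it, which is exactly the statement. First I would dispose of $t_k$: it enters (WSREMin) only through the objective, with positive coefficient $\alpha_k P_{r,c}$, and is otherwise constrained only by (\ref{eq:lower bound of on time}), so at an optimum $t^*_k=\max_n T\rho^*_{k,n}$, and more generally for any feasible tuple we may replace $t_k$ by $\max_n T\rho_{k,n}$ and only lower the objective. Writing $\tau_{k,n}:=T\rho_{k,n}$ for the absolute time SC $n$ serves MT $k$, we then have $\tau_{k,n}\le t_k$ for all $n$ and $\sum_k\tau_{k,n}\le T$ from (\ref{eq:WSTREMin c1}).

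Next comes the construction. Fix MT $k$ and let $D_n:=\tau_{k,n}r_{k,n}$ be the bits it currently receives on SC $n$, so $\sum_n D_n\ge\bar Q_k$ by (\ref{eq:WSTREMin c2}). Replace MT $k$'s schedule by one that uses \emph{every} SC for the whole duration $t_k$ while still delivering exactly $D_n$ bits on SC $n$; this calls for rate $\tilde r_{k,n}=D_n/t_k=\lambda_n r_{k,n}$ with $\lambda_n:=\tau_{k,n}/t_k\in[0,1]$, i.e. transmit power $\tilde p_{k,n}=\frac{\Gamma N_0W}{h_{k,n}}\bigl(2^{\lambda_n r_{k,n}/W}-1\bigr)$ by (\ref{eq:rate on each subcarrier}). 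By construction $\sum_n D_n$ and the ``on'' period $t_k$ are unchanged, so MT $k$'s data requirement and its contribution to the objective are untouched. The only thing to verify is the average-power budget (\ref{eq:WSTREMin c3}), and here the key inequality is
\begin{equation}
\tilde p_{k,n}=\frac{\Gamma N_0W}{h_{k,n}}\bigl(2^{\lambda_n r_{k,n}/W}-1\bigr)\le\lambda_n\,\frac{\Gamma N_0W}{h_{k,n}}\bigl(2^{r_{k,n}/W}-1\bigr)=\lambda_n p_{k,n},
\end{equation}
which holds because $x\mapsto 2^x-1$ is convex and vanishes at $x=0$; multiplying by $t_k$ gives $t_k\tilde p_{k,n}\le\tau_{k,n}p_{k,n}$ for every $k,n$, so the total ``power--time'' $\sum_{k,n}\tau_{k,n}p_{k,n}=T\bar P$ does not increase.

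After doing this for all MTs, each MT $k$ occupies every SC for absolute time $t_k$, so SC $n$ now carries total occupancy $\sum_k t_k$, which may exceed the old frame length. I would therefore set the new frame length to $T':=\max\{T,\sum_k t_k\}$ and stack the MTs in disjoint slots of lengths $t_1,\dots,t_K$; then $\rho^{\text{new}}_{k,n}=t_k/T'$ is independent of $n$, $\sum_k\rho^{\text{new}}_{k,n}=\sum_k t_k/T'\le1$, the new average power is $\bigl(\sum_{k,n}t_k\tilde p_{k,n}\bigr)/T'\le TP_{\text{avg}}/T'\le P_{\text{avg}}$, and (\ref{eq:received data by each user})--(\ref{eq:average power}) together with (\ref{eq:WSTREMin c1})--(\ref{eq:WSTREMin c4}) are all satisfied. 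The objective equals $\sum_k\alpha_kP_{r,c}t_k=\sum_k\alpha_kP_{r,c}\max_nT\rho_{k,n}$, hence is no larger than at the original point; applying this to an optimum forces $\max_n T\rho^*_{k,n}=t^*_k$ and lets us take $\rho^*_{k,n}=t^*_k/T=\rho^*_k$ for all $n$, which is (\ref{eq:TDMA statement}).

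The delicate point is the reassembly step: forcing each MT onto all SCs overflows the per-SC time budget $\sum_k T\rho_{k,n}\le T$, and the remedy---enlarging the (free) frame length $T$---is legitimate only because (\ref{eq:WSTREMin c3}) bounds the \emph{average} transmit power, so growing $T$ relaxes rather than tightens it; the supporting observation that spreading a fixed number of bits over more time on a given SC never costs more energy, i.e. $t_k\tilde p_{k,n}\le\tau_{k,n}p_{k,n}$, is the crux that makes the whole exchange go through. (I would also note explicitly that the problem's infimum over unbounded $T$ is subtle, but this does not affect the structural claim, which is what Proposition \ref{proposition:1} asserts.)
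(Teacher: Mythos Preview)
Your proof is correct and follows essentially the same exchange argument as the paper: both start from an arbitrary feasible solution, equalize each MT's per-SC occupation time to $\max_n T\rho_{k,n}$, invoke the convexity of the power--rate relation (you via $2^x-1$ being convex, the paper via concavity of $r_{k,n}$ in $p_{k,n}$) to show the per-SC energy does not increase, and then enlarge the frame length to $\sum_k \max_n T\rho_{k,n}$ so the D-TDMA slots fit. Your explicit handling of $t_k^*=\max_n T\rho_{k,n}^*$ up front and your remark on why growing $T$ is harmless under an \emph{average}-power constraint are nice clarifications, but the construction and the key inequality coincide with the paper's Appendix~\ref{appendix:proof proposition 1}.
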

\begin{proof}
See Appendix \ref{appendix:proof proposition 1}.
\end{proof}

\begin{remark}
Proposition \ref{proposition:1} indicates that the time sharing factors at all SCs should be identical for each MT $k$ to minimize its ``on'' period, which is achieved by D-TDMA transmission as shown in Fig. \ref{fig:Schemeexample}(a). Notice that D-TDMA minimizes the on time of each MT and therefore their weighted energy consumption, as will be shown next. However, it extends the transmission time of BS, $T$, and thus may not be optimal from the viewpoint of BS energy saving, as we shall see in Section \ref{sec:TE}.
\end{remark}

With Proposition \ref{proposition:1} and $t_k$'s given in (\ref{eq:TDMA statement}), the WSREMin problem under D-TDMA is formulated as
\begin{align}
&(\text{WSREMin-TDMA}): \nonumber \\~&\mathop{\mathtt{Min.}}\limits_{\{p_{k,n} \geq 0\}, \{t_k > 0\}}
~~ \sum^{K}_{k=1}\alpha_kP_{r,c}t_k  \\
\mathtt{s.t.}
& ~~ \sum^{N}_{n=1} t_kr_{k,n} \geq \bar{Q}_k, \forall k \label{eq:WSRT-TDMA c1} \\
& ~~ \sum^{K}_{k=1}\sum^{N}_{n=1}t_kp_{k,n} \leq P_{\text{avg}}\sum^{K}_{k=1}t_k. \label{eq:WSRT-TDMA c2}
\end{align}
It is observed that problem (WSREMin-TDMA) is non-convex due to the coupled terms $t_kr_{k,n}$ in (\ref{eq:WSRT-TDMA c1}) and $t_kp_{k,n}$ in (\ref{eq:WSRT-TDMA c2}). By a change of variables $s_{k,n} = t_kr_{k,n}$, $\forall k, n$, problem (WSREMin-TDMA) can be reformulated as
\begin{align}
\mathrm{(P1)}:~\mathop{\mathtt{Min.}}\limits_{\{s_{k,n} \geq 0\}, \{t_k > 0\}} &
~~ \sum^{K}_{k=1}\alpha_kP_{r,c}t_k   \label{eq:P1 objective}\\
\mathtt{s.t.}
& ~~ \sum^{N}_{n=1} s_{k,n} \geq \bar{Q}_k, \forall k \label{eq:P1 c1}\\
& ~~ \sum^{K}_{k=1}\sum^{N}_{n=1}t_k\frac{e^{a\frac{s_{k,n}}{t_k}}-1}{f_{k,n}} \leq  P_{\text{avg}}\sum^{K}_{k=1}t_k \label{eq:P1 c2}
\end{align}
where $f_{k,n} = \frac{h_{k,n}}{\Gamma N_0W}$ and $a = \frac{\ln2}{W}$. Note that the objective function in (\ref{eq:P1 objective}) and constraints in (\ref{eq:P1 c1}) are all affine, while the constraints in (\ref{eq:P1 c2}) are convex due to the fact that the function $t_ke^{a\frac{s_{k,n}}{t_k}}$ is the perspective of a strictly convex function $e^{as_{k,n}}$ with $a > 0$, and thus is a convex function \cite{Boydbook}. As a result, problem (P1) is convex. Thus, the Lagrange duality method can be applied to solve this problem exactly \cite{Boydbook}.

In the rest of this section, instead of solving the dual of problem (P1) directly which involves only numerical calculation and provides no insights, we develop a simple bisection search algorithm by revealing the structure of the optimal solution to problem (WSREMin-TDMA), given in the following theorem.

\begin{theorem}\label{theorem:1}
Let $\boldsymbol\lambda^{*}= [\lambda^{*}_1,\cdots,\lambda^{*}_K] \geq \mathbf{0}$ and $\beta^{*} \geq 0$ denote the optimal dual solution to problem (P1). The optimal solution of problem (WSREMin-TDMA) is given by
\begin{align}
p^{*}_{k,n} & = \left(\frac{\lambda^{*}_k}{a\beta^{*}} - \frac{1}{f_{k,n}}\right)^{+} \label{eq:P1 sol1}\\
t^{*}_k & = \frac{a\bar{Q}_k}{\sum^{N}_{n=1}\left(\ln\frac{\lambda^{*}_kf_{k,n}}{a\beta^{*}}\right)^{+}} \label{eq:P1 sol2}
\end{align}
where $\boldsymbol\lambda^{*}$ and $\beta^{*}$ need to satisfy
\begin{align}
\beta^{*} - \min\limits_k(\alpha_k)P_{r,c}/P_{\text{avg}} &< 0 \\
\alpha_kP_{r,c} - \beta^{*} P_{\text{avg}} + \sum^{N}_{n=1}u_{n}(\beta^{*},\lambda^{*}_k)  & = 0, \forall k \in \mathcal{K} \label{eq:P1 sol3}
\end{align}
where $u_{n}(\beta,\lambda_k) = \left(\frac{\lambda_k}{a} - \frac{\beta}{f_{k,n}}\right)^{+} - \frac{\lambda_k}{a}\left(\ln\frac{\lambda_kf_{k,n}}{a\beta}\right)^{+}$ and $(\cdot)^{+} \triangleq \max\{\cdot,0\}$.
\end{theorem}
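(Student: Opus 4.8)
The plan is to establish Theorem \ref{theorem:1} by writing down the KKT conditions for the convex problem (P1) and then interpreting them in terms of the original variables $p_{k,n}$ and $t_k$. First I would form the Lagrangian of (P1), introducing multipliers $\lambda_k \geq 0$ for the rate constraints (\ref{eq:P1 c1}) and $\beta \geq 0$ for the aggregate power constraint (\ref{eq:P1 c2}); the nonnegativity constraints $s_{k,n} \geq 0$ can be handled either by additional multipliers or by the standard projected stationarity argument that yields the $(\cdot)^+$ operator. Since (P1) is convex with affine objective and constraints, strong duality holds (Slater's condition is easily checked for $\bar Q_k > 0$), so the KKT conditions are necessary and sufficient for optimality.

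Next I would carry out the stationarity step. Differentiating the Lagrangian with respect to $s_{k,n}$ gives $\lambda_k^{*} = a\beta^{*} e^{a s_{k,n}^{*}/t_k^{*}}/f_{k,n}$ whenever $s_{k,n}^{*} > 0$, and $\leq$ that value when $s_{k,n}^{*} = 0$; solving for the transmit power $p_{k,n}^{*} = (e^{a s_{k,n}^{*}/t_k^{*}}-1)/f_{k,n}$ immediately produces the water-filling-type expression (\ref{eq:P1 sol1}). This also shows $\beta^{*} > 0$ (otherwise the power would be infinite), so the power constraint (\ref{eq:P1 c2}) is tight, and it shows $\lambda_k^{*} > 0$, so the rate constraint (\ref{eq:P1 c1}) is tight; the latter, combined with $s_{k,n}^{*} = t_k^{*} r_{k,n}^{*} = t_k^{*} W\log_2(1 + f_{k,n}p_{k,n}^{*})$ and the explicit form of $p_{k,n}^{*}$, yields $\sum_n t_k^{*} (\ln(\lambda_k^{*}f_{k,n}/(a\beta^{*})))^{+}/a = \bar Q_k$, which rearranges to (\ref{eq:P1 sol2}).

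The remaining step is the stationarity condition with respect to $t_k$, which is where the characterization (\ref{eq:P1 sol3}) of the dual variables comes from. Differentiating the perspective term $t_k e^{a s_{k,n}/t_k}$ with respect to $t_k$ gives $e^{a s_{k,n}/t_k}(1 - a s_{k,n}/t_k)$, so the $\partial/\partial t_k$ stationarity of the Lagrangian reads $\alpha_k P_{r,c} + \beta^{*}\sum_n \big(e^{a s_{k,n}^{*}/t_k^{*}}(1 - a s_{k,n}^{*}/t_k^{*}) - 1\big)/f_{k,n} - \beta^{*} P_{\text{avg}} = 0$; substituting $e^{a s_{k,n}^{*}/t_k^{*}} = 1 + f_{k,n}p_{k,n}^{*}$ and the formulas for $p_{k,n}^{*}$ and $s_{k,n}^{*}$, and simplifying, reproduces exactly $\alpha_k P_{r,c} - \beta^{*} P_{\text{avg}} + \sum_n u_n(\beta^{*},\lambda_k^{*}) = 0$. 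Finally, the strict inequality $\beta^{*} < \min_k(\alpha_k)P_{r,c}/P_{\text{avg}}$ comes from forcing each summand $u_n$ to be such that the above equality can hold — since $u_n(\beta,\lambda_k) \leq 0$ always (it is the difference of a log-barrier type term, verifiable by the inequality $x - 1 \geq \ln x$), the term $\alpha_k P_{r,c} - \beta^{*}P_{\text{avg}}$ must be nonnegative, and strictness follows because at least one $u_n$ must be strictly negative for the rate requirement to be met with finite power. I expect the main obstacle to be the algebraic bookkeeping in the $t_k$-stationarity step: one must carefully change variables back and forth between $(s_{k,n}, t_k)$ and $(p_{k,n}, t_k)$, keep track of which SCs are active (the $(\cdot)^+$ supports), and verify that the messy expression collapses to the clean $u_n$ form — together with confirming the sign of $u_n$ so that the bound on $\beta^{*}$ is justified rather than merely asserted.
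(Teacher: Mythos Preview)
Your proposal is correct and follows essentially the same Lagrangian/KKT route as the paper: form the Lagrangian of (P1), obtain the water-filling form from $\partial/\partial s_{k,n}$, use complementary slackness to tighten the rate and power constraints (yielding (\ref{eq:P1 sol1}) and (\ref{eq:P1 sol2})), and extract (\ref{eq:P1 sol3}) from the $t_k$-stationarity together with the sign analysis $u_n\leq 0$ for the bound on $\beta^{*}$. The only cosmetic difference is that the paper, instead of differentiating in $t_k$ directly, substitutes the optimal ratio $s_{k,n}/t_k$ back into the Lagrangian, observes that each $\mathcal{L}^{\text{P1}}_k$ becomes \emph{linear} in $t_k$, and concludes the coefficient must vanish for $t_k^{*}$ to be finite; this is equivalent to your derivative-equals-zero condition.
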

\begin{proof}
See Appendix \ref{appendix:proof theorem 1}.
\end{proof}
It is observed from (\ref{eq:P1 sol1}) that the optimal power allocation has a water-filling structure \cite{Tse}, except that the water levels are different over MTs. These are specified by $\lambda^{*}_k$ for MT $k$ and need to be found by solving the equations in (\ref{eq:P1 sol3}).
Since it can be shown that $\sum^{N}_{n=1}u_{n}(\beta,\lambda_k) \leq 0$ is strictly decreasing in $\lambda_k$ given $\beta < \min\limits_{k}\{\alpha_k\}P_{r,c}/P_{\text{avg}}$, with the assumption of identical channels for all the MTs, it is observed that larger $\alpha_k$ results in larger $\lambda^{*}_k$ or higher water-level, which means more power should be allocated to the MT that has higher priority in terms of energy minimization.

Based on Theorem \ref{theorem:1}, one algorithm to solve problem (WSREMin-TDMA) is given in Table \ref{table1}, in which $\beta^{*}$ is obtained through bisectional search until the average power constraint in (\ref{eq:WSRT-TDMA c2}) is met with equality. For the algorithm given in Table \ref{table1}, the computation time is dominated by updating the power and time allocation with given $\beta$ in steps b)-d), which is of order $KN$. Since the number of iterations required for the bisection search over $\beta$ is independent of $K$ and $N$, the overall complexity of the algorithm in Table \ref{table1} is $\mathcal{O}(KN)$.

\begin{table}[ht]
\begin{center}
\caption{\textbf{Algorithm 1}: Algorithm for Solving Problem (WSREMin-TDMA)} \vspace{0.2cm}
 \hrule
\vspace{0.3cm}
\begin{enumerate}
\item {\bf Given} $\beta_{\text{min}}(\triangleq 0) \leq \beta^{*} < \beta_{\text{max}} (\triangleq \min\limits_k(\alpha_k)P_{r,c}/P_{\text{avg}}$).
\item {\bf Repeat}
    \begin{itemize}
        \item[ a)] $\beta = \frac{1}{2}\left(\beta_{\text{min}} + \beta_{\text{max}}\right)$.
        \item[ b)] Obtain $\lambda_k$ such that $u(\beta,\lambda_k) = 0$, where $u(\beta,\lambda_k) = \alpha_kP_{r,c} - \beta P_{\text{avg}} + \sum^{N}_{n=1}u_{n}(\beta,\lambda_k)$, $k = 1, \cdots, K$,.
        \item[ c)] Obtain $p_{k,n}$ and $t_k$ according to (\ref{eq:P1 sol1}) and (\ref{eq:P1 sol2}) for $k = 1, \cdots, K$, $n = 1, \cdots, N$.
        \item[ d)] If $\sum^{K}_{k=1}\sum^{N}_{n=1}t_kp_{k,n} \geq P_{\text{avg}}\sum^{K}_{k=1}t_k$, set $\beta_{\text{min}} \leftarrow \beta$; otherwise, set $\beta_{\text{max}} \leftarrow \beta$.
    \end{itemize}
\item {\bf Until} $\beta_{\text{max}} - \beta_{\text{min}} < \delta$ where $\delta$ is a small positive constant that controls the algorithm accuracy.
\end{enumerate}
\vspace{0.2cm} \hrule \label{table1} \end{center}
\end{table}

\section{Transmitter-Side energy minimization}\label{sec:TE}
In this section, we study the case of minimizing the energy consumption at the BS while ignoring the receiver energy consumption at MTs. From (\ref{eq:average power}) and (\ref{eq:transmitter-side energy}), the transmitter-side energy minimization (TEMin) problem is formulated as
\begin{align}
&(\text{TEMin}): \nonumber \\~&\mathop{\mathtt{Min.}}\limits_{\{p_{k,n}\}, \{\rho_{k,n}\},T}
~~ \sum^{K}_{k=1}\sum^{N}_{n=1} T\rho_{k,n}p_{k,n} + TP_{t,c}  \\
\mathtt{s.t.}
& ~~ (\ref{eq:WSTREMin c1}), (\ref{eq:WSTREMin c2}), (\ref{eq:WSTREMin c3}), ~ \text{and} ~(\ref{eq:WSTREMin c4}).
\end{align}
A similar formulation has been considered in \cite{Xiong12,Jianhua13,Fettweis10,Xiong11}, in which the energy efficiency, defined as the ratio of the achievable rate to the total power consumption, is maximized under prescribed user rate constraints. Problem (TEMin), in contrast, considers the data requirements $\bar{Q}_k$'s and includes the transmission time $T$ as a design variable to explicitly address the tradeoffs between the transmission and non-transmission related energy consumption at BS: longer transmission time results in larger non-transmission related energy consumption $TP_{t,c}$ but smaller transmission related energy consumption $\sum^{K}_{k=1}\sum^{N}_{n=1} T\rho_{k,n}p_{k,n}$ with given data requirements \cite{Bormann08}.

Problem (TEMin) is also non-convex due to the coupled terms $T\rho_{k,n}r_{k,n}$ in (\ref{eq:WSTREMin c2}) and $\rho_{k,n}p_{k,n}$ in (\ref{eq:WSTREMin c3}). Compared with \cite{Xiong12,Jianhua13,Fettweis10,Xiong11}, it is observed that the design variable $T$ further complicates the problem. To solve this problem, we propose to decompose problem (TEMin) into two subproblems as follows.
\begin{align}
(\text{TEMin-1}):~\mathop{\mathtt{Min.}}\limits_{\{p_{k,n}\}, \{\rho_{k,n}\}} &
~~ \sum^{K}_{k=1}\sum^{N}_{n=1} \rho_{k,n}p_{k,n} \\
\mathtt{s.t.}
& ~~ (\ref{eq:WSTREMin c2}) ~ \text{and} ~ (\ref{eq:WSTREMin c3}) \\
& ~~ p_{k,n} \geq 0, ~ 0 \leq \rho_{k,n} \leq 1, \forall n, k. \label{eq:TE1 c3}
\end{align}
\begin{align}
(\text{TEMin-2}):~\mathop{\mathtt{Min.}}\limits_{T} &
~~ Tv(T) + TP_{t,c}  \label{eq:TE2 objective}\\
\mathtt{s.t.}
& ~~ v(T) \leq P_{\text{avg}} \label{eq:TE c3} \\
& ~~ T > 0.
\end{align}
Here, $v(T)$ denotes the optimal value of the objective function in problem (TEMin-1). Note that problem (TEMin-1) minimizes the BS average transmit power with given transmission time $T$ and a set of data constraints $\bar{Q}_k$. Then problem (TEMin-2) searches for the optimal $T$ to minimize the total energy consumption at BS subject to the average transmit power constraint, $P_{\text{avg}}$. In the rest of this section, we first solve problem (TEMin-1) with given $T >0$. Then, we show that problem (TEMin-2) is convex and can be efficiently solved by a bisection search over $T$.

\subsection{Solution to Problem (TEMin-1)}\label{sec:TEMin1}
With given $T>0$, the data requirement $\bar{Q}_k$ for MT $k$ can be equivalently expressed in terms of rate as $c_{k} = \frac{\bar{Q}_k}{T}$. Similarly as for problem (P1), we make a change of variables as $m_{k,n} = \rho_{k,n}r_{k,n}, \forall k, n$. Moreover, we define $\frac{m_{k,n}}{\rho_{k,n}} = 0$ at $m_{k,n} = \rho_{k,n} = 0$ to maintain continuity at this point. Problem (TEMin-1) is then reformulated as
\begin{align}
\mathrm{(P2)}:\mathop{\mathtt{Min.}}\limits_{\{m_{k,n}\}, \{\rho_{k,n}\}} &
~~ \sum^{K}_{k=1}\sum^{N}_{n=1} \rho_{k,n}\frac{e^{a\frac{m_{k,n}}{\rho_{k,n}}}-1}{f_{k,n}}  \label{eq:P2 objective}\\
\mathtt{s.t.}
& ~~ \sum^{K}_{k=1}\rho_{k,n} \leq 1, \forall n \label{eq:P2 c1}\\
& ~~ \sum^{N}_{n=1} m_{k,n} \geq c_k, \forall k \label{eq:P2 c2} \\
& ~~ m_{k,n} \geq 0, 0 \leq \rho_{k,n} \leq 1, \forall k, n.
\end{align}
Although problem (P2) can be shown to be convex just as for problem (P1), it does not have the provably optimal structure for SC allocation given in Proposition \ref{proposition:1}. In this case, in general the SC's are shared among all MTs at any given time, denoted by the set of time sharing factors $\{\rho_{k,n}\}$, which are different for all $k$ and $n$ in general. Since problem (P2) is convex, the Lagrange duality method can be applied to solve this problem optimally. Another byproduct of solving problem (P2) by this method is the corresponding optimal dual solution of problem (P2), which will be shown in the next subsection to be the desired gradient of the objective function in problem (TEMin-2) required for solving this problem. The details of solving problem (P2) and its dual problem through the Lagrange duality method can be found in Appendix \ref{appendix:solution to p2} with one algorithm summarized in Table \ref{table2}.

We point out here that the problem of transmit power minimization for OFDMA downlink transmission with SC time sharing has also been studied in \cite{Varaiya03,Wong99}. In \cite{Varaiya03}, problem (P2) is solved directly without introducing its dual problem, but in this paper, the corresponding dual solution is the gradient of the objective function in problem (TEMin-2) and therefore the dual problem is important. In \cite{Wong99}, the dual variables are updated one at a time until the data rate constraints for all users are satisfied, and this is extremely slow. In this paper, the optimal dual solution of problem (P2) is obtained more efficiently by the ellipsoid method \cite{Boyd2}. Since with the optimal dual solutions, we may obtain infinite sets of primal solution, and some might not satisfy the constraints in (\ref{eq:P2 c1}) and/or (\ref{eq:P2 c2}) \cite{Rui06}, the optimal solution of problem (P2) is further obtained by solving a linear feasibility problem (more details are given in Appendix \ref{appendix:solution to p2}). Finally, in \cite{Varaiya03,Wong99}, the time sharing factor $\rho_{k,n}$ is treated as a relaxed version of the SC allocation indicator, which needs to be quantized to be $0$ or $1$ after solving problem (P2). However, since problem (P2) in this paper is only a subproblem of problem (TEMin), in which the transmission time $T$ is a design variable, SC time sharing can indeed be implemented with proper scheduling at the BS such that each SC is still assigned to at most one MT at any given time.

\subsection{Solution to Problem (TEMin-2)}\label{sec:TEMin2}
With problem (TEMin-1) solved, we proceed to solve problem (TEMin-2) in this subsection. First, we have the following lemma.
\begin{lemma}\label{lemma:4}
Problem (TEMin-2) is convex.
\end{lemma}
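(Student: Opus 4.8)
The plan is to reduce everything to one fact: that the function $g(T):=Tv(T)$ is convex on $(0,\infty)$. Granting this, the objective of (TEMin-2), namely $Tv(T)+TP_{t,c}=g(T)+P_{t,c}T$, is the sum of a convex and a linear function of $T$ and hence convex; and since for $T>0$ the constraint $v(T)\le P_{\text{avg}}$ is equivalent to $g(T)-P_{\text{avg}}T\le 0$, the feasible set of (TEMin-2) is a sublevel set of the convex function $T\mapsto g(T)-P_{\text{avg}}T$ and is therefore convex. So (TEMin-2) minimizes a convex objective over a convex set, i.e., it is a convex problem.

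To prove that $g(T)=Tv(T)$ is convex I would use the same convexifying change of variables already applied to (P1) and (P2). For fixed $T>0$, set $R_{k,n}=T\rho_{k,n}$ (the total time SC $n$ serves MT $k$) and $B_{k,n}=T\rho_{k,n}r_{k,n}$ (the total bits delivered on that SC), with the convention $B_{k,n}/R_{k,n}=0$ when $B_{k,n}=R_{k,n}=0$, exactly as in (P2). This is a bijection between the feasible set of (TEMin-1) and the set $\mathcal{C}(T)$ of $(R,B)$ satisfying $\sum_k R_{k,n}\le T$ for all $n$, $\sum_n B_{k,n}\ge\bar Q_k$ for all $k$, $0\le R_{k,n}\le T$ and $B_{k,n}\ge0$ for all $k,n$, and $\Phi(R,B)\le P_{\text{avg}}T$, where $\Phi(R,B):=\sum_{k,n}R_{k,n}\frac{e^{aB_{k,n}/R_{k,n}}-1}{f_{k,n}}$; moreover along this bijection $T\sum_{k,n}\rho_{k,n}p_{k,n}=\Phi(R,B)$. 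Hence $g(T)=\min_{(R,B)\in\mathcal{C}(T)}\Phi(R,B)$ (with $g(T)=+\infty$ if $\mathcal{C}(T)=\emptyset$).

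Next I would observe that this representation is \emph{jointly} convex in $(R,B,T)$: the objective $\Phi$ does not involve $T$ and, being a sum of perspectives of the strictly convex functions $z\mapsto(e^{az}-1)/f_{k,n}$ with $a>0$, is jointly convex in $(R,B)$ — exactly the argument used for (P1) and (P2); and each constraint defining $\mathcal{C}(T)$ is either affine in $(R,B,T)$ or of the form $\Phi(R,B)\le P_{\text{avg}}T$, which is convex because $\Phi$ is convex and the right-hand side is linear in $T$. Therefore $g(T)$ is the partial minimization over $(R,B)$ of a function of $(R,B,T)$ that is jointly convex (namely $\Phi$ augmented by the indicator of the convex set $\{(R,B,T):(R,B)\in\mathcal{C}(T)\}$), and partial minimization of a jointly convex function yields a convex function \cite{Boydbook}. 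This establishes convexity of $g$ and hence the lemma.

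The only delicate points are internal to that step: one must check that the change of variables really is a bijection of the feasible sets, including the degenerate point $\rho_{k,n}=0$ (handled by the lower-semicontinuous perspective extension, precisely as for (P2)), and that the partial-minimization rule is applied to the jointly convex extended-real-valued function rather than to $\Phi$ alone. Once $g(T)=Tv(T)$ is known to be convex, the convexity of both the objective and the feasible region of (TEMin-2) is immediate. (Conceptually, $v(T)$ is the optimal value of (P2) as a function of the constraint right-hand side $c_k=\bar Q_k/T$, i.e.\ a value of the perturbation function $p$ of (P2), which is convex; hence $g(T)=Tv(T)=T\,p(\bar Q/T)$ is the restriction to a line of the perspective of $p$ and so is convex — but the change-of-variables argument above is self-contained given what has already been proved for (P1) and (P2).)
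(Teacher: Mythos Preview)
Your argument is correct, and it takes a genuinely different route from the paper. The paper proves directly, by constructing a time-shared solution, that both $v(T)$ and $Tv(T)$ are convex: given optimal schedules for $T_1$ and $T_2$, it builds a feasible schedule for $T=\theta T_1+(1-\theta)T_2$ by running the $T_1$-schedule for a $\theta T_1/T$ fraction and the $T_2$-schedule for the rest, then checks feasibility and compares objective values. You instead perform the energy-domain change of variables $R_{k,n}=T\rho_{k,n}$, $B_{k,n}=T\rho_{k,n}r_{k,n}$, recognize $g(T)=Tv(T)$ as the partial minimization of a jointly convex function (perspective objective, affine-in-$(R,B,T)$ constraints), and invoke the standard fact that partial minimization preserves convexity. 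Your approach is cleaner and avoids separately proving convexity of $v(T)$, since you recast the constraint $v(T)\le P_{\text{avg}}$ as the sublevel set of $g(T)-P_{\text{avg}}T$; the paper's constructive argument, on the other hand, is self-contained and gives an explicit interpolating schedule, which is operationally informative.

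One small wrinkle: as defined in the paper, $v(T)$ is the optimal value of (TEMin-1), which (in its (P2) reformulation) does \emph{not} carry the average-power constraint; the power constraint enters only in (TEMin-2). Including $\Phi(R,B)\le P_{\text{avg}}T$ in your set $\mathcal{C}(T)$ therefore makes your partial-minimization value equal to $Tv(T)$ only on the feasible set and $+\infty$ elsewhere. This is harmless for the conclusion (indeed it packages objective and feasibility into one convex extended-real function), but for the literal identity $g(T)=Tv(T)$ you stated at the outset you should drop that constraint from $\mathcal{C}(T)$; the joint-convexity and partial-minimization argument goes through unchanged either way.
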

\begin{proof}
See Appendix \ref{appendix:proof lemma 4}.
\end{proof}
Since problem (TEMin-2) is convex, and $v(T)$ is continuous and differentiable \cite{MilgromSegal02}, a gradient based method e.g. Newton method \cite{Boydbook} can be applied to solve problem (TEMin-2), where the required gradient is given in the following lemma.
\begin{lemma}\label{lemma:5}
The gradient of $v(T)T+ P_{t,c}T$ with respect to $T$, $T > 0$, is given by
\begin{align}\label{eq:gradient of objective}
v(T) -\frac{1}{T}\sum^{K}_{k=1}\lambda^{*}_k(T)\bar{Q}_k + P_{t,c}
\end{align}
where $\{\lambda^{*}_k(T)\}$ is the optimal dual solution of problem (P2) with given $T > 0$.
\end{lemma}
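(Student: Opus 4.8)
The plan is to apply the envelope theorem (a.k.a.\ the sensitivity interpretation of Lagrange multipliers) to problem (P2), which computes $v(T)$ as a function of the parameter $T$ through the data-rate requirements $c_k=\bar Q_k/T$. Write $g(T)=v(T)T+P_{t,c}T$; then $g'(T)=v(T)+Tv'(T)+P_{t,c}$, so the whole task reduces to showing $Tv'(T)=-\frac1T\sum_k\lambda^*_k(T)\bar Q_k$, i.e.\ $v'(T)=-\frac1{T^2}\sum_k\lambda^*_k(T)\bar Q_k$. First I would note that in problem (P2) the variable $T$ enters \emph{only} through the right-hand sides $c_k=\bar Q_k/T$ of the constraints in (\ref{eq:P2 c2}); the objective (\ref{eq:P2 objective}) and the remaining constraints (\ref{eq:P2 c1}) do not depend on $T$. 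Hence, regarding $v$ as a function of the vector $\mv c=(c_1,\dots,c_K)$, the standard multiplier-sensitivity result gives $\partial v/\partial c_k=-\lambda^*_k$, where $\lambda^*_k$ is the optimal dual variable associated with the $k$-th constraint $\sum_n m_{k,n}\ge c_k$ (the sign is determined by the $\ge$ orientation of the constraint; increasing the required rate can only increase the minimal power). Then the chain rule with $dc_k/dT=-\bar Q_k/T^2$ yields
\begin{align}
v'(T)=\sum_{k=1}^{K}\frac{\partial v}{\partial c_k}\frac{dc_k}{dT}
=\sum_{k=1}^{K}(-\lambda^*_k(T))\left(-\frac{\bar Q_k}{T^2}\right)
=\frac{1}{T^2}\sum_{k=1}^{K}\lambda^*_k(T)\bar Q_k.
\end{align}
Wait --- this has the opposite sign to what is claimed; I must be careful here. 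Re-examining: $v(T)$ is the minimum of the convex objective, which is \emph{decreasing} in $T$ (a longer frame means a smaller per-slot rate requirement $c_k$, hence smaller transmit power), so $v'(T)\le 0$. Since each $\lambda^*_k\ge 0$ and $\bar Q_k>0$, the correct identity must be $v'(T)=-\frac1{T^2}\sum_k\lambda^*_k(T)\bar Q_k$; the resolution is that the sensitivity is $\partial v/\partial c_k=+\lambda^*_k$ is wrong and the right statement with the paper's sign convention for the constraint $c_k-\sum_n m_{k,n}\le 0$ gives $\partial v/\partial c_k=-\lambda^*_k$ --- plugging this in with $dc_k/dT=-\bar Q_k/T^2$ gives $v'(T)=\sum_k(-\lambda^*_k)(-\bar Q_k/T^2)>0$, which contradicts monotonicity, so in fact the paper's Lagrangian must be set up so that the active multiplier sign makes $\partial v/\partial c_k=-\lambda^*_k$ with $\lambda^*_k\le 0$ impossible since $\lambda^*_k\ge0$; the only consistent reading is that the formula in Appendix C defines $\lambda^*_k$ for the constraint written as $\sum_n m_{k,n}\ge c_k$ with the KKT stationarity giving the water-filling levels, and the perturbation $c_k\mapsto c_k+\epsilon$ raises $v$ by $\lambda^*_k\epsilon$, hence $v'(T)=\sum_k\lambda^*_k\cdot(-\bar Q_k/T^2)=-\frac1{T^2}\sum_k\lambda^*_k\bar Q_k\le 0$, consistent. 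So the clean path is: (i) establish $v$ is differentiable (cited from \cite{MilgromSegal02}, also justified by strict convexity of (P2) giving a unique primal optimum and strong duality giving a unique active multiplier); (ii) invoke the envelope/sensitivity theorem to get $\partial v/\partial c_k=\lambda^*_k$; (iii) chain-rule through $c_k=\bar Q_k/T$; (iv) add $v(T)+P_{t,c}$ from differentiating $v(T)T+P_{t,c}T$ directly.

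The main obstacle --- and the step I would spend the most care on --- is rigorously justifying differentiability of $v(T)$ and the exact sign/normalization of $\lambda^*_k(T)$. Mere convexity of (P2) gives only sub/super-differentiability of the value function; to get a genuine derivative I would argue that (P2) has a unique optimal primal solution (the objective (\ref{eq:P2 objective}) is strictly convex in $m_{k,n}$ for fixed positive $\rho_{k,n}$, and on the boundary one handles the perspective carefully), that Slater's condition holds (take any strictly feasible allocation with $\sum_k\rho_{k,n}<1$), so strong duality and the KKT conditions hold, and that the optimal dual variable $\lambda^*_k(T)$ for each rate constraint is unique. These facts together with the Danskin/Milgrom–Segal envelope theorem deliver both differentiability of $v$ and the identity $v'(T)\cdot dc = \sum_k \lambda^*_k \, dc_k$, from which (\ref{eq:gradient of objective}) follows after multiplying by $T$ and adding the $v(T)+P_{t,c}$ terms coming from the product rule on $v(T)T+P_{t,c}T$. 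I would also double-check the sign convention against the explicit Lagrangian written in Appendix C so that the stated $-\frac1T\sum_k\lambda^*_k(T)\bar Q_k$ matches; given that $v$ is decreasing in $T$ and each $\lambda^*_k\ge 0$, the negative sign in (\ref{eq:gradient of objective}) is exactly what makes the gradient small (or negative) for small $T$ and positive for large $T$, which is the behaviour needed for the bisection/Newton search in problem (TEMin-2) to locate the optimal finite $T$.
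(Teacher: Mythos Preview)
Your approach via the envelope/sensitivity theorem is correct in the end, and it is a genuinely different route from the paper's. The paper does not invoke $\partial v/\partial c_k=\lambda^*_k$ and the chain rule; instead it works directly with the dual representation. Writing the dual function of (P2) so that its only $T$-dependence is the additive term $\tfrac{1}{T}\sum_k\lambda_k\bar Q_k$, the paper fixes $(\boldsymbol\lambda^*(T),\boldsymbol\beta^*(T))$ and, for any $T'>0$, bounds $v(T')=\max_{\boldsymbol\lambda,\boldsymbol\beta}g(\boldsymbol\lambda,\boldsymbol\beta;T')\ge g(\boldsymbol\lambda^*(T),\boldsymbol\beta^*(T);T')$, then uses the elementary inequality $(T-T^2/T')\le T'-T$ to obtain the affine lower bound $v(T')\ge v(T)-\tfrac{1}{T^2}\sum_k\lambda^*_k(T)\bar Q_k\,(T'-T)$. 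This identifies $-\tfrac{1}{T^2}\sum_k\lambda^*_k(T)\bar Q_k$ as a subgradient of the convex function $v$; differentiability (cited from Milgrom--Segal) then upgrades subgradient to gradient, and the product rule finishes.

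What each buys: your envelope-theorem argument is shorter once the sensitivity identity $\partial v/\partial c_k=\lambda^*_k$ is granted, but you saw yourself that getting the sign right requires care, and you also need uniqueness of the active multipliers to get a genuine derivative rather than a subdifferential---a point you flag but would need to establish. The paper's approach avoids both pitfalls: the sign is never in doubt because the $T$-dependence is isolated in a single explicit term of the dual, and uniqueness of multipliers is never needed because the subgradient inequality is verified directly and differentiability is invoked only at the last step. If you keep your route, you should excise the back-and-forth on signs (the clean statement is simply that for the minimization (P2) with constraint $\sum_n m_{k,n}\ge c_k$ and the Lagrangian of Appendix~C, the value function satisfies $\partial v/\partial c_k=\lambda^*_k\ge 0$, whence $v'(T)=\sum_k\lambda^*_k\cdot(-\bar Q_k/T^2)\le 0$) and supply the multiplier-uniqueness argument you allude to.
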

\begin{proof}
See Appendix \ref{appendix:proof lemma 5}.
\end{proof}

\subsection{Algorithm for problem (TEMin)}
With both problems (TEMin-1) and (TEMin-2) solved, the solution of problem (TEMin) can be obtained by iteratively solving the above two problems. In summary, an algorithm to solve problem (TEMin) is given in Table \ref{table3}. For the algorithm given in Table \ref{table3}, the computation time is dominated by obtaining $v(T)$ and $\boldsymbol\lambda^{*}(T)$ with given $T$ through the algorithm in Table \ref{table2} of Appendix \ref{appendix:solution to p2}, which is of order $K^4+N^4+K^3N^3$. Similarly, since the number of iterations required for the bisection search over $T$ is independent of $K$ and $N$, the overall complexity of the algorithm given in Table \ref{table3} bears the same order over $K$ and $N$ as that for the algorithm in Table \ref{table2} of Appendix \ref{appendix:solution to p2}, which is $\mathcal{O}(K^4+N^4+K^3N^3)$.

\begin{table}[ht]
\begin{center}
\caption{\textbf{Algorithm 3}: Algorithm for Solving Problem (TEMin)} \vspace{0.2cm}
 \hrule
\vspace{0.3cm}
\begin{enumerate}
\item Define $y(T) \triangleq v(T) -\frac{1}{T}\sum^{K}_{k=1}\lambda^{*}_k(T)\bar{Q}_k + P_{t,c}$, where $v(T)$ and $\boldsymbol\lambda^{*}(T)$ are obtained by the Algorithm 2 in Table \ref{table2} of Appendix \ref{appendix:solution to p2}.
\item Obtain $T^{'}$ through bisection search such that $y(T^{'}) = 0$.
\item {\bf If } $v(T^{'}) \leq P_{\text{avg}}$, {\bf then} $T^{*} = T^{'}$; {\bf otherwise} find $T^{*}$ through bisection search such that $v(T^{*}) = P_{\text{avg}}$.
\item Obtain the optimal solution of problem (P2), i.e., $\{\{m^{*}_{k,n}\}, \{\rho^{*}_{k,n}\}\}$, with $T^{*}$ by the Algorithm 2 in Table \ref{table2} of Appendix \ref{appendix:solution to p2}.
\item Obtain the optimal solution of problem (TEMin), i.e., $\{\{p^{*}_{k,n}\}, \{\rho^{*}_{k,n}\}\}$, as $p^{*}_{k,n} = \left(2^{m^{*}_{k,n}/\rho^{*}_{k,n}}-1\right)/f_{k,n}, \forall k,n$.
\end{enumerate}
\vspace{0.2cm} \hrule \label{table3} \end{center}
\end{table}
\begin{remark}
Compared with the D-TDMA based solution in Section \ref{sec:WSRE} for the case of receiver-side energy minimization, the optimal solution of problem (TEMin) for transmitter-side energy minimization implies that OFDMA (c.f. Fig. \ref{fig:Schemeexample}(b)), in which the $N$ SCs are shared among all MTs at any given time, needs to be employed. However, OFDMA may prolong the active time of individual MTs, i.e., $t_k$'s, and is thus not energy efficient in general from the perspective of MT energy saving.
\end{remark}

\section{Joint Transmit and Receive Energy Minimization}\label{sec:WSE}
From the two extreme cases studied in Sections \ref{sec:WSRE} and \ref{sec:TE}, we know that D-TDMA as shown in Fig. \ref{fig:Schemeexample}(a) is the optimal transmission strategy to minimize the weighted-sum receive energy consumption at the MT receivers; however, OFDMA as shown in Fig. \ref{fig:Schemeexample}(b) is optimal to minimize the energy consumption at the BS transmitter. There is evidently no single strategy that can minimize the BS's and MTs' energy consumptions in OFDM-based multiuser downlink transmission. In this section, motivated by the solutions derived from the previous two special cases, we propose a new multiple access scheme termed Time-Slotted OFDMA (TS-OFDMA) transmission scheme, which includes D-TDMA and OFDMA as special cases, and propose an efficient algorithm to approximately solve problem (WSTREMin) using the proposed TS-OFDMA.

\subsection{TS-OFDMA}\label{sec:TS-OFDMA}
The TS-OFDMA scheme is described as follows. The total transmission time $T$ is divided into $J$ orthogonal time slots with $1 \leq J \leq K$. The $K$ MTs are then assigned to each of the $J$ slots for downlink transmission. Let $\Phi_j$ represent the set of MTs assigned to slot $j$, $j=1,\cdots,J$. We thus have
\begin{align}
\Phi_j \cap \Phi_k & = \varnothing, \forall j \neq k \\
\bigcup_j \Phi_j & = \mathcal{K}.
\end{align}
The period that each MT $k$ is switched on (versus off) then equals the duration of its assigned slot, denoted by $T_j$, i.e., $t_k = T_j$ if $k \in \Phi_j$, with $\sum^{J}_{j=1}T_j=T$. Notice that TS-OFDMA includes D-TDMA (if $J=K$) and OFDMA (if $J=1$) as two special cases\footnote{\color{red}Note that OFDMA is considered as a flexible transmission scheme, in which each MT can use any subcarrier at any time during the transmission, and TS-OFDMA may be seen as a special form of OFDMA. However, as mentioned in the previous sections, it is difficult to quantify the ``on'' period of each MT with the inequality in (\ref{eq:lower bound of on time}) being the only known expression. The proposed TS-OFDMA is thus more ``general'' than OFDMA and D-TDMA in the sense that it explicitly allows each MT to be off for a fraction of a frame (outside its assigned time slot) to save energy, and yet allows subcarriers sharing among users within the same time slot.}. An illustration of TS-OFDMA for a multiuser OFDM system with $K = 4$, $N = 4$, and $J = 3$ is given in Fig. \ref{fig:Schemeexample}(c).

\subsection{Solution to Problem (WSTREMin) with given $J$ and MT grouping}\label{sec:given group}
In this subsection, we solve problem (WSTREMin) based on TS-OFDMA with given $J$ and MT grouping. We first study two special cases, i.e., $J = K$ and $J = 1$, which can be regarded as the extensions of the results in Section \ref{sec:WSRE} and Section \ref{sec:TE}, respectively, by considering the weighted-sum transmitter and receiver energy consumption as the objective function. We thus have the following results.
\begin{enumerate}
\item $J = K$ and $|\Phi_j| = 1, j = 1,\cdots,J$: problem (WSTREMin) can be reformulated as
\begin{align}
\mathop{\mathtt{Min.}}\limits_{\{p_{k,n} \geq 0\}, \{t_k > 0\}} &
~~ \sum^{K}_{k=1}t_k\left(\alpha_kP_{r,c}+\alpha_0P_{t,c}\right) \nonumber \\
& + \alpha_0\sum^{K}_{k=1}t_k\sum_np_{k,n} \nonumber \\
\mathtt{s.t.}
& ~~ (\ref{eq:WSRT-TDMA c1}) ~ \text{and}~ (\ref{eq:WSRT-TDMA c2}). \label{eq:WSTREMin JK}
\end{align}
Note that for $J = K$, $T_k = t_k, \forall k$. Although problem (\ref{eq:WSTREMin JK}) and problem (WSREMin-TDMA) differ in their objective functions, problem (\ref{eq:WSTREMin JK}) can be recast as a convex problem similarly as problem (WSREMin-TDMA), and it can be shown that their optimal solutions possess the same structure. Therefore, problem (\ref{eq:WSTREMin JK}) can be solved by the algorithm similar to that in Table \ref{table1}.

\item $J = 1$ and $|\Phi_J| = K$: problem (WSTREMin) can be simplified to
\begin{align}
\mathop{\mathtt{Min.}}\limits_{\{p_{k,n}\}, \{\rho_{k,n}\},T} &
~~ \alpha_0T\sum^{K}_{k = 1}\sum^{N}_{n=1} \rho_{k,n}p_{k,n} \nonumber \\
&+T\left(\alpha_0P_{t,c}  + \sum^{K}_{k = 1}\alpha_kP_{r,c}\right) \nonumber \\
\mathtt{s.t.}
& ~~ (\ref{eq:WSTREMin c1}), (\ref{eq:WSTREMin c2}), (\ref{eq:WSTREMin c3}), ~ \text{and} ~(\ref{eq:WSTREMin c4}). \label{eq:WSTREMin J1}
\end{align}
Since problem (\ref{eq:WSTREMin J1}) has exactly the same structure as problem (TEMin), it can be solved by the algorithm similar to that in Table \ref{table3}.
\end{enumerate}

Next, consider the general case of $1 < J < K$. In this case, we divide $J$ slots into two sets as
\begin{align}
\mathcal{B}_1 & = \left\{j~:~ |\Phi_j|  =  1, j=1,\cdots,J\right\} \label{eq:group1}\\
\mathcal{B}_2 & = \left\{j~:~ |\Phi_j|  \geq 2, j=1,\cdots,J \right\} \label{eq:group2}
\end{align}
where $\mathcal{B}_1$ and $\mathcal{B}_2$ include slots that correspond to transmissions to single MT and multiple MTs, respectively. For slots in $\mathcal{B}_1$, we can further group them together and thereby formulate one single WSTREMin problem similarly as for the case of $J = K$. On the other hand, for slots in $\mathcal{B}_2$, we can perform WSTREMin in each slot separately similarly as for the case of $J = 1$. Furthermore, we assume that the average power assigned to all the slots in $\mathcal{B}_1$ and each slot in $\mathcal{B}_2$ are $P_{\text{avg}}$ to avoid coupled power allocation over these slots, so that each problem can be solved independently. Note that it is possible to jointly optimize the power allocation across all the slots. However, it requires extra complexity and thus this approach was not pursued.

The final tasks remaining in solving problem (WSTREMin) is to find the the optimal number of slots and to optimally assign MTs to each of these slots. Since finding the optimal grouping is a combinatorial problem, an exhaustive search can incur a large complexity if $K$ is large. To avoid the high complexity of exhaustive search, we propose a suboptimal MT grouping algorithm for $1 < J < K$ in Section \ref{sec:example strategy} next. The optimal $J$ can then be found by a one-dimension search.

\subsection{Suboptimal MT grouping algorithm for $1 <J < K$}\label{sec:example strategy}
In this subsection, we propose a suboptimal grouping algorithm for given $1 < J < K$, termed as \emph{channel orthogonality based grouping} (COG), with low complexity. The proposed algorithm is motivated by the observation that grouping MTs, whose strongest channels are orthogonal to each other (i.e. in different SCs), into one slot will not affect the power allocation and transmission time of each MT but will shorten the total transmission time, and thus reduce the total energy consumption.

For the purpose of illustration, we first define the following terms. Let $\mathbf{h}_{k} = [h_{k,1},\cdots,h_{k,N}]^T$ and $\hat{\mathbf{h}}_{k}$ denote the original and normalized (nonnegative) channel vector from the BS to MT $k$ across all SCs, respectively, where $\hat{\mathbf{h}}_{k} = \frac{\mathbf{h}_{k}}{\|\mathbf{h}_{k}\|}$. Furthermore, let $\pi_{k,l}$ denote the channel correlation index (CCI) between MTs $k$ and $l$, which is defined as the inner product between their normalized channel vectors, i.e.,
\begin{align}
\pi_{k,l} = \hat{\mathbf{h}}^T_{k}\hat{\mathbf{h}}_{l}, \forall k,l \neq k.
\end{align}
Note that $\pi_{k,l} = \pi_{l,k}$, and smaller (larger) $\pi_{k,l}$ indicates that MT $k$ is more (less) orthogonal to MT $j$ in terms of channel power realization across different SCs, which can be utilized as a cost associated with grouping MTs $k$ and $l$ into one slot. Finally, define the sum-CCI $\Pi_j$ of slot $j$ as
\begin{align}
\Pi_j = \sum_{l,k \in \Phi_j, l\neq k}\pi_{k,l}, j = 1,\cdots,J.
\end{align}

We are now ready to present the proposed COG algorithm for given $J$:
\begin{enumerate}
\item Compute the sum-CCI of MT $k$ to all other MTs, i.e. $\sum^{K}_{l \neq k}\pi_{k,l}$, $k = 1,\cdots, K$.
\item Assign the $J$ MTs corresponding to the first $J$ largest sum-CCI each to an individual time slot.
\item Each of the remaining $K - J$ MTs is successively assigned to one of the $J$ slots, which has the minimum increase of $\Pi_j$, $j=1,...,K$.
\end{enumerate}

\subsection{Algorithm for problem (WSTREMin)}
Combining the results in Section \ref{sec:given group} and Section \ref{sec:example strategy}, our complete algorithm for problem (WSTREMin) based on TS-OFDMA is summarized in Table \ref{table4}.

\begin{table}[ht]
\begin{center}
\caption{\textbf{Algorithm 3}: Algorithm for Solving Problem (WSTREMin)} \vspace{0.2cm}
 \hrule
\vspace{0.3cm}
\begin{enumerate}
\item Solve the two extreme cases, i.e. $J = K$ in (\ref{eq:WSTREMin JK}) and $J = 1$ in (\ref{eq:WSTREMin J1}), as described in Section \ref{sec:given group}.
\item For $1 < J < K$
    \begin{itemize}
        \item[ a)] Obtain the MT grouping by the COG algorithm.
        \item[ b)] Obtain $\mathcal{B}_1$ and $\mathcal{B}_2$ according to (\ref{eq:group1}) and (\ref{eq:group2}), respectively.
        \item[ c)] For slots in $\mathcal{B}_1$, solve one single WSTREMin problem similarly as for the case of $J = K$; for slots in $\mathcal{B}_2$, perform WSTREMin in each slot separately similarly as for the case of $J = 1$.
    \end{itemize}
\item Identify the optimal $J$ and MT grouping as the one resulting in the smallest WSTRE, and obtain its corresponding time and power allocations from the previous two steps.
\end{enumerate}
\vspace{0.2cm} \hrule \label{table4} \end{center}
\end{table}

Next, we analyze the complexity of the proposed algorithm in Table \ref{table4}. For step 1), the time complexity of the two extreme cases have been analyzed in Section \ref{sec:WSRE} and Section \ref{sec:TE}, which are of order $KN$ and $K^4+N^4+K^3N^3$, respectively. Therefore, the time complexity of step 1) is $\mathcal{O}(K^4+N^4+K^3N^3)$. For step 2), in each iteration with given $1 < J < K$, the computation time is dominated by solving separate WSTREMin problems for slots in $\mathcal{B}_1$ and $\mathcal{B}_2$ in step c), which depends on the MT grouping obtained by the COG algorithm. However, from the complexity analysis of the two extreme cases, it is observed that the worst case in terms of computation complexity is to assign as many as MTs into one slot, i.e., there are $J - 1$ slots in $\mathcal{B}_1$ but one slot in $\mathcal{B}_2$, which is of order $(K - J + 1)^4 + N^4 + (K-J+1)^3N^3$. Therefore, the overall worst case complexity of the algorithm in Table \ref{table4} is $\mathcal{O}(KN^4 + \sum^{K-1}_{J=1}(K-J+1)^4+(K-J+1)^3N^3)$, which is upper bounded by $\mathcal{O}(K^5+K^4N^3+KN^4)$.

\section{Time-constrained Optimization}\label{sec:delay}
{\color{red}We note that the total transmission time $T$ is practically bounded by $T \leq T_{\text{max}}$, where $T_{\text{max}}$ may be set as the channel coherence time or the maximum transmission delay constraint, whichever is smaller. In this section, we highlight the consequences of introducing the maximum transmission time constraint, and discuss in details how the obtained results in the previous sections can be extended to the case of time-constrained optimization.

Note that the maximum transmission time constraint, i.e., $T \leq T_{\text{max}}$, does not affect the solvability of problem (TEMin) in Section \ref{sec:TE} and problem (WSTREMin) under TS-OFDMA in Section \ref{sec:WSE}. However, in the case of maximum time constraint, the optimality of the TDMA structure for WSREMin may not hold in general (for example, when $\sum_{k=1}^K t_k^*> T_{\text{max}}$). However, Proposition \ref{proposition:1} reveals that orthogonalizing MTs' transmission in time is beneficial for WSREMin, which is useful even for the case of time-constrained optimization, since we may still assume D-TDMA structure to approximately solve problem (WSREMin). In the rest of this section, we discuss how to solve problems (WSREMin-TDMA), (TEMin) and (WSTREMin) under TS-OFDMA in the case with maximum time constraint $T_{\text{max}}$, which are termed as (WSREMin-TDMA-T), (TEMin-T) and (WSTREMin-T), respectively.

First, for problem (WSREMin-TDMA-T), it is observed that adding $\sum^{K}_{k=1}t_k \leq T_{\text{max}}$ does not affect its convexity after the same change of variables as problem (WSREMin-TDMA). Furthermore, the water-filling structure presented in Theorem \ref{theorem:1} still holds for problem (WSREMin-TDMA-T). Therefore, the algorithm in Table \ref{table1} can still be applied to solve problem (WSREMin-TDMA-D) with one additional step of bisection search over the maximum transmission time to ensure that it is no larger than $T_{\text{max}}$. On the other hand, the feasibility of problem (WSREMin-TDMA-T) can be verified by setting $\alpha_k = 1/P_{r,c},\forall k \in \mathcal{K}$ with the algorithm in Table \ref{table1}. If the obtained optimal value is smaller than $T_{\text{max}}$, it is feasible; otherwise, it is infeasible. It should be noted that, for the case with $T_{\text{max}}$, problem (WSTREMin-T) being feasible does not guarantee the feasibility of problem (WSREMin-TDMA-T) due to the prior assumed D-TDMA scheme.

For problem (TEMin-T), the maximum transmission time constraint does not affect its solvability compared with problem (TEMin), where the same decomposition method can be applied since problem (TEMin-2) with $T \leq T_{\text{max}}$ added, termed as (TEMin-2-T), is still convex. As a result, Lagrange duality method can again be applied to solve this problem optimally. Besides, the feasibility of (TEMin-T) can be checked by solving problem (TEMin-1) in Section \ref{sec:TE} with $T = T_{\text{max}}$ using the algorithm in Table \ref{table2}. If the obtained optimal value is smaller than the average power limit $P_{\text{avg}}$, problem (TEMin-T) is feasible; otherwise, it is infeasible.

Finally, for problem (WSTREMin-T) under TS-OFDMA with given $J$ and MT grouping (by the same suboptimal MT grouping algorithm as proposed in Section \ref{sec:example strategy}), time allocation needs to be optimized among different time slots to ensure the new maximum transmission time constraint. Since it can be shown that the optimal value of problem (WSREMin-TDMA-T) or (TEMin-T) is convex with respect to $T_{\text{max}}$ similarly as that in Lemma \ref{lemma:4}, gradient based method, e.g. Newton method \cite{Boydbook}, can be applied. Last, the feasibility of problem (WSTREMin-T) for given $J$ and MT grouping under TS-OFDMA can be checked by setting $\alpha_0 = 0$ and $\alpha_k = 1/P_{r,c}, \forall k \in \mathcal{K}$. If the obtained total transmission time is smaller than $T_{\text{max}}$, it is feasible; otherwise, it is infeasible.}

\section{Numerical Example}\label{sec:numerical}
In this section, we present simulation results to verify our theoretical analysis and demonstrate the tradeoffs in energy consumption at the BS and MTs. It is assumed that there are $K = 4$ active MTs with distances to the BS as $400$, $600$, $800$ and $700$ meters, and data requirements $\bar{Q}_k$ as $8.5$, $11.5$, $14.5$ and $17.5$ Kbits, respectively. The total number of SCs $N$ is set to be 16, and the bandwidth of each SC $W$ is $20$kHz. Independent multipath fading channels, each with six equal-energy independent consecutive time-domain taps, are assumed for each transmission link between each pair of the BS and MTs. Each tap coefficient consists of both small-scale fading and distance dependent attenuation components. The small-scale fading is assumed to be Rayleigh distributed with zero mean and unit variance, and the distance-dependent attenuation has a path-loss exponent equal to four. The power consumption of each MT, when turned on, is set to be $0.5$W. For the BS, we assume a constant non-transmission related power of $P_{t,c} = 20$W and an average transmit power of $P_{\text{avg}} = 30$W. We also set $\alpha_k = 1$ for all MTs, i.e., we consider the sum-energy consumption of all MTs. Finally, we set the receiver noise spectral density as $N_0=-174$ dBm/Hz, which corresponds to a typical thermal noise at room temperature.

\begin{figure}
\centering
\epsfxsize=0.7\linewidth
    \includegraphics[width=9cm]{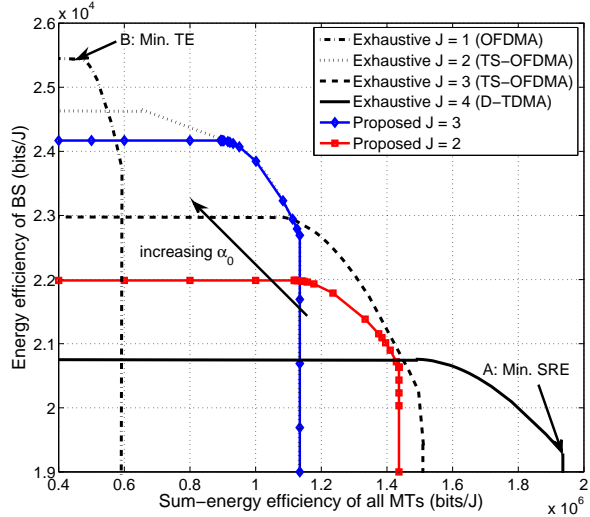}
\caption{Energy efficiency tradeoffs with different transmission schemes. The points ``Min. SRE'' and ``Min. TE'' represent the results obtained by methods in Section \ref{sec:WSRE} and Section \ref{sec:TE}, respectively.} \label{fig:Tradeoff}
\end{figure}

Fig. \ref{fig:Tradeoff} shows the energy efficiency tradeoffs (in bits/joule) between BS and MTs\footnote{For the ease of illustration, we treat the $K$ MTs as an ensemble, whose energy efficiency is defined as the ratio of sum-data received and sum-energy consumed at all MTs, i.e., $\sum^{K}_{k=1}\bar{Q}_k/\sum^{K}_{k=1}E_{r,k}$.} with various values of $J$, which is the number of orthogonal time slots in our proposed TS-OFDMA scheme in Section \ref{sec:WSE}, and by varying the value of the BS energy consumption weight $\alpha_0$ for each given $J$. In particular, the curves Exhaustive $J = 2$ and $J = 3$ are obtained by exhaustively searching all possible MT groupings, which serve as performance benchmark. The curves Proposed $J = 2$ and $J = 3$ are obtained by the COG algorithm presented in Section \ref{sec:example strategy}. The performance gap between the proposed algorithm and the benchmark is the price paid for lower computation complexity. It is observed that as $\alpha_0$ increases, the energy efficiency of BS increases and that at MTs decreases, respectively, for each $J$. It is easy to identify two boundary points of these tradeoff curves, namely, point A (on the curve of $J=4$ with $\alpha_0 = 0$) and B (on the curve of $J=1$ with $\alpha_0 = \infty$) correspond to the two special cases of TS-OFDMA, i.e., D-TDMA in Section \ref{sec:WSRE} and OFDMA in Section \ref{sec:TE}, respectively. By comparing the two boundary points, we observe that if BS's energy efficiency is reduced by $25\%$, then the sum-energy efficiency of MTs can be increased by around three times. Furthermore, it is observed that more flexible energy efficiency tradeoffs between BS and MTs than those in the cases of $J=1$ and $J=4$ can be achieved by applying the proposed TS-OFDMA transmission scheme with $J=2$ or $3$.

\begin{figure}
\centering
\epsfxsize=0.7\linewidth
    \includegraphics[width=9cm]{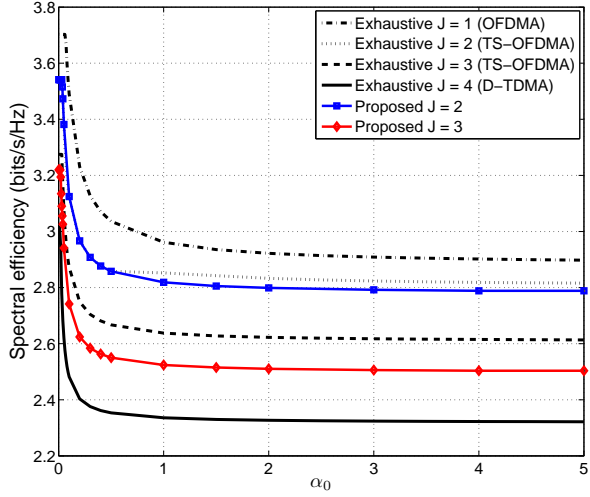}
\caption{Spectral efficiency comparison with different transmission schemes.} \label{fig:efficiency tradeoff}
\end{figure}

Next, in Fig. \ref{fig:efficiency tradeoff}, we show the spectral efficiency (in bits/s/Hz) of the considered multiuser downlink system over $\alpha_0$ with different values of $J$, which is defined as the total amount of transmitted data per unit time and bandwidth, i.e. $\sum^{K}_{k=1}\bar{Q}_k/TNW$. First, it is observed that the spectral efficiency decreases and finally converges as $\alpha_0$ increases for each value of $J$. The decreasing of spectral efficiency is the price to be paid for less energy consumption of BS (c.f. Fig. \ref{fig:Tradeoff}), which is due to the increase of the required transmission time $T$ and hence results in more energy consumption of MTs. It is also observed that for a given $\alpha_0$, the spectral efficiency decreases as $J$ increases, which is intuitively expected as $J = 1$, i.e., OFDMA, is known to be most spectrally efficient for multiuser downlink transmission.

\begin{remark}
For the proposed TS-OFDMA transmission scheme with given user grouping, the power consumption at MTs can be mathematically interpreted as extra non-transmission related power at the BS. As a result, problem (WSTREMin) can be treated as an equivalent merely transmitter-side energy minimization problem. As $\alpha_0$ increases, with proper normalization, it can be verified that the effective non-transmission related power decreases. Therefore, the optimal (most energy-efficient) transmission time $T$ will increase \cite{Bormann08}, which results in the decreasing of the spectral efficiency as shown in Fig. \ref{fig:efficiency tradeoff}. Furthermore, as $J$ increases (less MTs in each slot), MTs have more opportunity to be in the ``off'' mode to save energy, while on the contrary, BS has less opportunity to gain from so-called multiuser diversity \cite{Tse} to improve spectral efficiency. Consequently, the results in Fig. \ref{fig:Tradeoff} and Fig. \ref{fig:efficiency tradeoff} are expected.
\end{remark}

\section{Conclusion}\label{sec:conclusion}
In this paper, for cellular systems under an OFDM-based downlink communication setup, we have characterized the tradeoffs in minimizing the BS's versus MTs' energy consumptions by investigating a weighted-sum transmitter and receiver joint energy minimization (WSTREMin) problem, subject to an average transmit power constraint at the BS and data requirements of individual MTs. Two extreme cases, i.e., weighted-sum receiver-side energy minimization (WSREMin) for the MTs and transmitter-side energy minimization (TEMin) for the BS, are first solved separately. It is shown that Dynamic TDMA (D-TDMA) is the optimal transmission strategy for WSREMin, while OFDMA is optimal for TEMin. Based on the obtained resource allocation solutions in these two cases, we proposed a new multiple access scheme termed Time-Slotted OFDMA (TS-OFDMA) transmission scheme, which includes D-TDMA and OFDMA as special cases, to achieve more flexible energy consumption tradeoffs between the BS and MTs. The results of this paper provide important new insights to the optimal design of next generation cellular networks with their challenging requirements on both the spectral and energy efficiency of the network.

\appendices
\section{Proof of Proposition \ref{proposition:1}}\label{appendix:proof proposition 1}
We prove this proposition by contradiction. Suppose that $\left\{\{\rho^{a}_{k,n}\},\{p^{a}_{k,n}\},T^{a}\right\}$ (termed Solution A) is the optimal solution of problem (WSREMin), which does not satisfy the condition given in Proposition \ref{proposition:1}, i.e., there exists at least one MT $k$, such that its associated time sharing factors are not all identical. Next, we construct a new solution $\left\{\{\rho^{b}_{k,n}\},\{p^{b}_{k,n}\},T^{b}\right\}$ (termed Solution B) for problem (WSREMin), which satisfies the condition given in Proposition \ref{proposition:1} and also achieves a weighted-sum receiver energy consumption no larger than that of Solution A. The details of constructing Solution B are given as follows:
\begin{align}
T^{b} & = \sum^{K}_{k=1}\max_n\{\rho^{a}_{k,n}T^{a}\} \label{eq:construct 1}\\
\rho^{b}_{k,n} & = \max_j\{\rho^{a}_{k,j}\}/\sum^{K}_{i=1}\max_j\{\rho^{a}_{i,j}\},  \forall k, n \label{eq:construct 2}\\
r^{b}_{k,n} & = \left\{ \begin{array}{cl} \displaystyle
r^{a}_{k,n}\rho^{a}_{k,n}T^{a}/\rho^{b}_{k,n}T^{b} & \mbox{if } \rho^{a}_{k,n} > 0 \\
0 & \mbox{otherwise}. \end{array}\right. \forall k, n. \label{eq:construct 3}
\end{align}
Note that $\rho^{b}_{k,n} = \rho^{b}_{k}, \forall n \in \mathcal{N}, k \in \mathcal{K}$.

Next, we check that Solution B is also feasible for problem (WSREMin). Since
\begin{align}
\sum^{K}_{k=1}\rho^{b}_{k,n} & =  \sum^{K}_{k=1}\max_j\{\rho^{a}_{k,j}\}/\sum^{K}_{i=1}\max_j\{\rho^{a}_{i,j}\} = 1, \forall n \\
\sum^{N}_{n=1} T^{b}\rho^{b}_{k,n}r^{b}_{k,n} & = \sum^{N}_{n=1} T^{a}\rho^{a}_{k,n}r^{a}_{k,n} \geq \bar{Q}_k, \forall k
\end{align}
we verify that both the constraints in (\ref{eq:WSTREMin c1}) and (\ref{eq:WSTREMin c2}) are satisfied. Moreover, from (\ref{eq:construct 1}) and (\ref{eq:construct 2}), it is observed  that
\begin{align}
T^{b}\rho^{b}_{k,n} = T^{a}\max_j\{\rho^{a}_{k,j}\} \geq T^{a}\rho^{a}_{k,n}, \forall k, n
\end{align}
i.e., the time allocated to MT $k$ on SC $n$ in Solution B is no smaller than that in Solution A. From (\ref{eq:construct 3}), we also have that
\begin{align}
\rho^{b}_{k,n}T^{b}r^{b}_{k,n} = \rho^{a}_{k,n}T^{a}r^{a}_{k,n}, \forall k ,n
\end{align}
i.e., the amount of data delivered to MT $k$ on SC $n$ is the same for both Solution A and Solution B. Since $r_{k,n}$ is a strictly concave and increasing function of $p_{k,n}$, it is easy to verify that the amount of energy consumed for delivering the same amount of data decreases as the transmission time increases. Therefore, we have
\begin{align}
\sum^{K}_{k=1}\sum^{N}_{n=1} \rho^{b}_{k,n}p^{b}_{k,n} \leq \sum^{K}_{k=1}\sum^{N}_{n=1} \rho^{a}_{k,n}p^{a}_{k,n} \leq P_{\text{avg}}
\end{align}
i.e., the constraint in (\ref{eq:WSTREMin c3}) is satisfied by Solution B.

Finally, we show that Solution B achieves a weighted-sum receiver-side energy that is no larger than that by Solution A as follows. According to (\ref{eq:lower bound of on time}), we infer that
\begin{align}
t^{a}_k \geq \max\limits_n \{T^{a}\rho^{a}_{k,n}\}, \forall k
\end{align}
where $t^{a}_k$ is the on time of MT $k$ in Solution A. Let $t^{b}_k$ denote the on time of MT $k$ in Solution B. Since $\rho^{b}_{k,n}$'s are identical for given MT $k$, we can find $t^{b}_k$'s such that
\begin{align}
t^{b}_k = T^{b}\rho^{b}_{k,n} = \max_n\{T^{a}\rho^{a}_{k,n}\} \leq t^{a}_k, \forall k
\end{align}
which indicates that Solution B achieves a weighted-sum receiver-side energy no larger than that by Solution A. Thus, Proposition \ref{proposition:1} is proved.

\section{Proof of Theorem \ref{theorem:1}}\label{appendix:proof theorem 1}
Denote $\{s^{*}_{k,n}\}$ and $\{t^{*}_k\}$ as the optimal solution of problem (P1). Let $\beta$ and $\boldsymbol\lambda = [\lambda_1,\lambda_2,\cdots,\lambda_K]$ be the dual variables of problem (P1) associated with the average transmit power constraint in (\ref{eq:P1 c2}) and the data requirements in (\ref{eq:P1 c1}), respectively. Then the Lagrangian of problem (P1) can be expressed as
\begin{align}
& \mathcal{L}^{\text{P1}}(\{s_{k,n}\},\{t_k\},\boldsymbol\lambda, \beta) \nonumber \\
& = \sum^{K}_{k=1}\alpha_kP_{r,c}t_k -\sum^{K}_{k=1}\lambda_k\left(\sum^{N}_{n=1} s_{k,n} - \bar{Q}_k\right) \nonumber \\ &+ \beta\left(\sum^{K}_{k=1}\sum^{N}_{n=1}t_k\frac{e^{a\frac{s_{k,n}}{t_k}}-1}{f_{k,n}} - P_{\text{avg}}\sum^{K}_{k=1}t_k\right) \\
& = \sum^{K}_{k=1}\left(\alpha_kP_{r,c}t_k + \beta\sum^{N}_{n=1}t_k\frac{e^{a\frac{s_{k,n}}{t_k}}-1}{f_{k,n}} - \beta P_{\text{avg}}t_k\right) \nonumber \\
&-\sum^{K}_{k=1}\lambda_k\sum^{N}_{n=1} s_{k,n}+ \sum^{K}_{k=1}\lambda_k\bar{Q}_k. \label{eq:lagrangian of P1}
\end{align}

The Lagrange dual function of $\mathcal{L}^{\text{P1}}(\cdot)$ in (\ref{eq:lagrangian of P1}) is defined as
\begin{align}\label{eq:P1 dual function}
g^{\text{P1}}(\boldsymbol\lambda, \beta) = \mathop{\mathtt{Min.}}\limits_{\{s_{k,n} \geq 0\}, \{t_k > 0\}} \mathcal{L}^{\text{P1}}(\{s_{k,n}\},\{t_k\},\boldsymbol\lambda, \beta).
\end{align}
The dual problem of problem (P1) is expressed as
\begin{align}
\mathrm{(P1-D)}:~\mathop{\mathtt{Max.}}\limits_{\boldsymbol\lambda \geq 0,\beta \geq 0} &
~~ g^{\text{P1}}(\boldsymbol\lambda, \boldsymbol\beta).
\end{align}
Since (P1) is convex and satisfies the Salter's condition \cite{Boydbook}, strong duality holds between problem (P1) and its dual problem (P1-D). Let $\boldsymbol\lambda^{*} \geq 0$ and $\beta^{*} \geq 0$ denote the optimal dual solutions to problem (P1); then we have the following lemma.

\begin{lemma}\label{lemma:1}
The optimal solution to problem (P1-D) satisfies that
\begin{align}
\lambda^{*}_k &> 0, \forall k \label{eq:lemma11}\\
\beta^{*} &> 0\\
\beta^{*} - \min\limits_k(\alpha_k)P_{r,c}/P_{\text{avg}} & < 0 \label{eq:lemma12}\\
\alpha_kP_{r,c} - \beta^{*} P_{\text{avg}} + \sum^{N}_{n=1}u_{n}(\beta^{*},\lambda^{*}_k) &= 0, \forall k \label{eq:lemma13}
\end{align}
where $u_{n}(\beta,\lambda_k) = \left(\frac{\lambda_k}{a} - \frac{\beta}{f_{k,n}}\right)^{+} - \frac{\lambda_k}{a}\left(\ln\frac{\lambda_kf_{k,n}}{a\beta}\right)^{+}$ and $(\cdot)^{+} \triangleq \max\{\cdot,0\}$.
\end{lemma}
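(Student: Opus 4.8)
The plan is to read all four assertions off the Karush--Kuhn--Tucker (KKT) conditions for problem (P1), which are necessary and sufficient here since (P1) is convex and, as already noted, satisfies Slater's condition so that strong duality holds. Write $(\{s^{*}_{k,n}\},\{t^{*}_k\})$ for a primal optimum and $(\boldsymbol\lambda^{*},\beta^{*})$ for a dual optimum. The single recurring fact driving the argument is that the optimal value of (P1) is strictly positive: $t_k=0$ cannot deliver $\bar Q_k>0$ bits, and along any feasible sequence with all $t_k\to 0$ each term $t_k p_{k,n}$ (with $\sum_n s_{k,n}\ge\bar Q_k$) diverges while $P_{\text{avg}}\sum_k t_k\to 0$, violating the average-power constraint; hence every feasible point has $\sum_k\alpha_k P_{r,c}t_k$ bounded away from $0$.

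First I would show $\beta^{*}>0$. If $\beta^{*}=0$, the Lagrangian in \eqref{eq:lagrangian of P1} reduces to $\sum_k\alpha_k P_{r,c}t_k-\sum_k\lambda_k(\sum_n s_{k,n}-\bar Q_k)$; its infimum over $s_{k,n}\ge 0,\ t_k>0$ is $-\infty$ unless every $\lambda_k=0$, and equals $0$ when all $\lambda_k=0$. Either way the dual value would be non-positive, contradicting strong duality with a strictly positive primal value; hence $\beta^{*}>0$. Next, with $\beta^{*}>0$, the stationarity condition in $s_{k,n}$, namely $-\lambda^{*}_k+a\beta^{*}e^{a s^{*}_{k,n}/t^{*}_k}/f_{k,n}\ge 0$ with equality whenever $s^{*}_{k,n}>0$, forces $s^{*}_{k,n}=0$ for all $n$ if $\lambda^{*}_k=0$, violating $\sum_n s^{*}_{k,n}\ge\bar Q_k>0$; hence $\lambda^{*}_k>0$ for every $k$, which is \eqref{eq:lemma11}. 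The same stationarity relation then yields $e^{a s^{*}_{k,n}/t^{*}_k}=\max\{1,\lambda^{*}_k f_{k,n}/(a\beta^{*})\}$, equivalently $a s^{*}_{k,n}/t^{*}_k=(\ln(\lambda^{*}_k f_{k,n}/(a\beta^{*})))^{+}$, which is exactly the water-filling form \eqref{eq:P1 sol1} once expressed through $p^{*}_{k,n}$.

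The remaining two claims come from the stationarity condition in $t_k$, which holds with equality because $t_k>0$ is the only constraint on $t_k$ and $t^{*}_k>0$ at optimum. Differentiating the perspective term $t_k(e^{a s_{k,n}/t_k}-1)/f_{k,n}$ gives $e^{a s_{k,n}/t_k}(1-a s_{k,n}/t_k)-1$ times $1/f_{k,n}$, so $\partial\mathcal L^{\text{P1}}/\partial t_k=\alpha_k P_{r,c}-\beta^{*}P_{\text{avg}}+\beta^{*}\sum_n f_{k,n}^{-1}[e^{a s^{*}_{k,n}/t^{*}_k}(1-a s^{*}_{k,n}/t^{*}_k)-1]=0$. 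Substituting $a s^{*}_{k,n}/t^{*}_k=(\ln(\lambda^{*}_k f_{k,n}/(a\beta^{*})))^{+}$ and simplifying in the two cases $\lambda^{*}_k f_{k,n}>a\beta^{*}$ and $\lambda^{*}_k f_{k,n}\le a\beta^{*}$, the bracketed sum collapses term by term to $\beta^{*-1}u_n(\beta^{*},\lambda^{*}_k)$, producing \eqref{eq:lemma13}. Finally, for \eqref{eq:lemma12} I would show $u_n(\beta,\lambda_k)\le 0$ always: it is $0$ when $\lambda_k f_{k,n}\le a\beta$, and equals $\tfrac{\lambda_k}{a}\,\phi(z)$ with $z=\lambda_k f_{k,n}/(a\beta)>1$ and $\phi(z)=1-1/z-\ln z$, where $\phi(1)=0$ and $\phi'(z)=(1-z)/z^{2}<0$, so $\phi(z)<0$. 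Since $\sum_n s^{*}_{k,n}\ge\bar Q_k>0$ forces $a s^{*}_{k,n}/t^{*}_k>0$, hence $z>1$, for at least one $n$, we get $\sum_n u_n(\beta^{*},\lambda^{*}_k)<0$; combining with \eqref{eq:lemma13} gives $\beta^{*}P_{\text{avg}}<\alpha_k P_{r,c}$ for every $k$, i.e. $\beta^{*}<\min_k(\alpha_k)P_{r,c}/P_{\text{avg}}$.

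I expect the main obstacle to be the bookkeeping in the $t_k$-stationarity step: correctly differentiating the perspective function and then carrying the $(\cdot)^{+}$ operator through the substitution so that the messy sum $\beta^{*}\sum_n f_{k,n}^{-1}[e^{x}(1-x)-1]$ reduces cleanly to $\sum_n u_n(\beta^{*},\lambda^{*}_k)$. Establishing $\beta^{*}>0$, $\lambda^{*}_k>0$, and the sign of $u_n$ is comparatively routine once strong duality and the strict positivity of the optimal value are in place.
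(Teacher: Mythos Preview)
Your proposal is correct and follows essentially the same approach as the paper: both argue via the Lagrangian/KKT structure of (P1), first establishing $\beta^{*}>0$ and $\lambda^{*}_k>0$ from strong duality and the data constraints, then obtaining \eqref{eq:lemma13} from optimality in $t_k$, and finally deducing \eqref{eq:lemma12} from the nonpositivity (and strict negativity for at least one $n$) of $u_n$. The only cosmetic difference is that the paper substitutes the optimal ratio $s^{*}_{k,n}/t^{*}_k$ into $\mathcal L^{\text{P1}}_k$ first and observes the result is linear in $t_k$ (so the coefficient must vanish for a finite minimizer), whereas you differentiate in $t_k$ first and then substitute; the two routes are equivalent and your bookkeeping for the perspective derivative and the reduction to $u_n$ is correct.
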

\begin{proof}
From (\ref{eq:lagrangian of P1}), it follows that the minimization of $\mathcal{L}^{\text{P1}}(\{s_{k,n}\},\{t_k\},\boldsymbol\lambda, \beta)$ can be decomposed into $K$ independent optimization problems, each for one MT and given by
\begin{align}\label{eq:decomposed problem of P1}
\mathop{\mathtt{Min.}}\limits_{t_k > 0, \{s_{k,n} \geq 0\}} ~ \mathcal{L}^{\text{P1}}_k(\{s_{k,n}\}, t_k, \lambda, \beta), ~ k=1,\cdots,K
\end{align}
where $\mathcal{L}^{\text{P1}}_k(\{s_{k,n}\}, t_k, \lambda, \beta) \triangleq \alpha_kt_k - \lambda_k\sum^{N}_{n=1} s_{k,n} + \beta\sum^{N}_{n=1}t_k\frac{e^{a\frac{s_{k,n}}{t_k}}-1}{f_{k,n}} - \beta P_{\text{avg}}t_k$. Note that $\mathcal{L}^{\text{P1}}(\{s_{k,n}\},\{t_k\},\boldsymbol\lambda, \beta) = \sum^{K}_{k=1}\mathcal{L}^{\text{P1}}_k(\cdot) + \sum^{K}_{k=1}\lambda_k\bar{Q}_k$.
By taking the derivative of $\mathcal{L}^{\text{P1}}_k(\cdot)$ with respect to $s_{k,n}$, we have
\begin{align}\label{eq:kkt1}
\frac{\partial \mathcal{L}^{\text{P1}}_k}{\partial s_{k,n}} = \frac{a\beta}{f_{k,n}}e^{a\frac{s_{k,n}}{t_k}} - \lambda_k.
\end{align}
Let $\{s^{\star}_{k,n}(\lambda_k, \beta)\}$ and $t^{\star}_k(\lambda_k, \beta)$ denote the optimal solution of problem (\ref{eq:decomposed problem of P1}) given $\lambda_k$ and $\beta$.

Next, we show that $\beta^{*} > 0$ and $\lambda^{*}_k > 0, \forall k$ by contradiction. If $\beta^{*} = 0$ and $\lambda^{*}_k = 0, \forall k$, from (\ref{eq:decomposed problem of P1}), it follows that $g^{\text{P1}}(\boldsymbol\lambda^{*}, \beta^{*}) = 0$, which is approached as $t_k \rightarrow 0, \forall k$, and the optimal value of problem (P1-D) is thus $0$, which contradicts with the fact that strong duality holds between problems (P1) and (P1-D). If $\beta^{*} = 0$ and $\exists i \in \mathcal{K}$ such that $\lambda^{*}_i > 0$, it follows that $\frac{\partial \mathcal{L}^{\text{P1}}_k}{\partial s_{i,n}} < 0, \forall n$ at the optimal dual solution, which implies that $s^{*}_{i,n} = \infty, \forall n$. Since $s_{i,n} = t_ir_{i,n}$, which is the amount of data delivered to MT $i$ on SC $n$ over the transmission, $s^{*}_{i,n} = \infty, \forall n$ indicates that $Q_i = \infty$, which is evidently suboptimal for problem (P1). If $\exists j \in \mathcal{K}$ such that $\lambda^{*}_j = 0$ and $\beta^{*} > 0$, it follows that $\frac{\partial \mathcal{L}^{\text{P1}}_j}{\partial s_{j,n}} > 0, \forall n$ at the optimal dual solution, which implies that $s^{*}_{j,n} = 0, \forall n$ or $Q_j = 0$. Then it contradicts with the fact that $\bar{Q}_j > 0$. Combining all the three cases above, it concludes that $\beta^{*} > 0$ and $\lambda^{*}_k > 0, \forall k$.

With $\beta > 0$ and $\lambda_k > 0, \forall k$ as proved above and from (\ref{eq:kkt1}), the ratio $\frac{s^{\star}_{k,n}(\lambda_k,\beta)}{t^{\star}_k(\lambda_k,\beta)}$ thus needs to satisfy
\begin{align}\label{eq:derivation P1}
\frac{s^{\star}_{k,n}(\lambda_k,\beta)}{t^{\star}_k(\lambda_k,\beta)} = \frac{1}{a}\left(\ln\frac{\lambda_kf_{k,n}}{a\beta}\right)^{+}, \forall n.
\end{align}
Substituting (\ref{eq:derivation P1}) back to $\mathcal{L}^{\text{P1}}_k(\cdot)$ yields
\begin{align}\label{eq:linear1}
\mathcal{L}^{\text{P1}}_{k}(\{s_{k,n}\}, t_k, \lambda, \beta) = \left(\alpha_kP_{r,c} - \beta P_{\text{avg}} + \sum^{N}_{n=1}u_{n}(\beta,\lambda_k)\right)t_k
\end{align}
which is a linear function of $t_k$ and thus $t^{*}_k$ is finite only if $\alpha_kP_{r,c} - \beta^{*} P_{\text{avg}} + \sum^{N}_{n=1}u_{n}(\beta^{*},\lambda^{*}_k) = 0$. Condition (\ref{eq:lemma13}) is thus verified.

Finally, we show that $\beta^{*} < \alpha_kP_{r,c}/P_{\text{avg}}$. Since it can be shown that given $\beta$, $\sum^{N}_{n=1}u_{n}(\beta,\lambda_k)$ equals zero when $\lambda_k \leq \frac{a\beta}{\max_n\{f_{k,n}\}}$ and is a strictly decreasing function of $\lambda_k$ when $\lambda_k > \frac{a\beta}{\max_n\{f_{k,n}\}}$, we have $\beta^{*} \leq \alpha_kP_{r,c}/P_{\text{avg}}$ from (\ref{eq:lemma11}). If $\beta^{*} = \alpha_kP_{r,c}/P_{\text{avg}}$, it follows that $\lambda^{*}_k \leq \frac{a\beta^{*}}{\max_n\{f_{k,n}\}}$, which implies that $s^{*}_{k,n} = 0, \forall n$ from (\ref{eq:derivation P1}). This again contradicts with the fact that $\bar{Q}_k > 0$. Lemma \ref{lemma:1} is thus proved.
\end{proof}

Next, we proceed to show the structural property of the optimal solution to problem (WSREMin-TDMA). Let the optimal solution of this problem be given by $\{p^{*}_{k,n}\}$ and $\{t^{*}_k\}$ with $s^{*}_{k,n} = r^{*}_{k,n}t^{*}_k, \forall n,k$, as in problem (P1). From the change of variables and (\ref{eq:rate on each subcarrier}), it follows that
\begin{align}\label{eq:t11}
\frac{s^{*}_{k,n}}{t^{*}_k} = W\log_2\left(1 + f_{k,n}p^{*}_{k,n}\right), \forall n, k.
\end{align}
Furthermore, from (\ref{eq:derivation P1}) we have
\begin{align}\label{eq:t12}
\frac{s^{*}_{k,n}}{t^{*}_k} = \frac{1}{a}\left(\ln\frac{\lambda^{*}_kf_{k,n}}{a\beta}\right)^{+}, \forall n, k.
\end{align}
Combining (\ref{eq:t11}) and (\ref{eq:t12}), (\ref{eq:P1 sol1}) can be easily verified.

From Lemma \ref{lemma:1} and the complementary slackness conditions \cite{Boydbook} satisfied by the optimal solution of problem (P1), it follows that
\begin{align}
\sum^{N}_{n=1} s^{*}_{k,n} & = \bar{Q}_k, \forall k \label{eq:data tight}\\
\sum^{K}_{k=1}\sum^{N}_{n=1}t^{*}_k\frac{e^{a\frac{s^{*}_{k,n}}{t^{*}_k}}-1}{f_{k,n}} & =  P_{\text{avg}}\sum^{K}_{k=1}t^{*}_k. \label{eq:power tight}
\end{align}
In other words, the optimal solutions of problem (P1) or problem (WSREMin-TDMA) are always attained with all the data constraints in (\ref{eq:P1 c1}) or (\ref{eq:WSRT-TDMA c1}) and average power constraint in (\ref{eq:P1 c2}) or (\ref{eq:WSRT-TDMA c2}) being met with equality. Substituting (\ref{eq:t12}) into (\ref{eq:data tight}), (\ref{eq:P1 sol2}) then easily follows. Theorem \ref{theorem:1} is thus proved.

\section{Solution to problem (P2)}\label{appendix:solution to p2}
The Lagrangian of problem (P2) can be expressed as
\begin{align}
& \mathcal{L}^{\text{P2}}(\{m_{k,n}\},\{\rho_{k,n}\},\boldsymbol\lambda, \boldsymbol\beta) \nonumber \\
& = \sum^{K}_{k=1}\sum^{N}_{n=1} \rho_{k,n}\frac{e^{a\frac{m_{k,n}}{\rho_{k,n}}}-1}{f_{k,n}} - \sum^{K}_{k=1}\lambda_k\left(\sum^{N}_{n=1} m_{k,n}-c_k\right)\nonumber \\
& + \sum^{N}_{n=1}\beta_n\left(\sum^{K}_{k=1}\rho_{k,n}-1\right) \\
& = \sum^{K}_{k=1}\sum^{N}_{n=1}\left(\rho_{k,n}\frac{e^{a\frac{m_{k,n}}{\rho_{k,n}}}-1}{f_{k,n}} - \lambda_km_{k,n}+\beta_n\rho_{k,n}\right)\nonumber \\
&+\sum^{K}_{k=1}\lambda_kc_k-\sum^{N}_{n=1}\beta_n \label{eq:lagrangian of P2}
\end{align}
where $\boldsymbol\lambda = [\lambda_1,\lambda_2,\cdots,\lambda_K]$ and $\boldsymbol\beta = [\beta_1,\beta_2,\cdots,\beta_N]$ are the vectors of dual variables associated with the constraints in (\ref{eq:P2 c2}) and (\ref{eq:P2 c1}), respectively.

Then, the corresponding dual function is defined as
\begin{align}\label{eq:P2 dual function}
g^{\text{P2}}(\boldsymbol\lambda, \boldsymbol\beta) = \mathop{\mathtt{Min.}}\limits_{\{m_{k,n} \geq 0\}, \{0 \leq \rho_{k,n} \leq 1\}} \mathcal{L}^{\text{P2}}(\{m_{k,n}\},\{\rho_{k,n}\},\boldsymbol\lambda, \boldsymbol\beta).
\end{align}
The dual problem of problem (P2) is thus expressed as
\begin{align}
\mathrm{(P2-D)}:~\mathop{\mathtt{Max.}}\limits_{\boldsymbol\lambda \geq 0,\boldsymbol\beta \geq 0} &
~~ g^{\text{P2}}(\boldsymbol\lambda, \boldsymbol\beta).
\end{align}
Since (P2) is convex and satisfies the Salter's condition \cite{Boydbook}, strong duality holds between problem (P2) and its dual problem (P2-D). To solve (P2-D), in the following we first solve problem (\ref{eq:P2 dual function}) to obtain $g(\boldsymbol\lambda, \boldsymbol\beta)$ with given $\boldsymbol\lambda \geq 0$ and $\boldsymbol\beta \geq 0$.

The expression of (\ref{eq:lagrangian of P2}) suggests that the minimization of $\mathcal{L}^{\text{P2}}(\{m_{k,n}\},\{\rho_{k,n}\},\boldsymbol\lambda, \boldsymbol\beta)$ can be decomposed into $NK$ parallel subproblems, each of which is for one given pair of $n$ and $k$ and expressed as
\begin{align}\label{eq:P2 decomposed problem}
\mathop{\mathtt{Min.}}\limits_{m_{k,n} \geq 0, 0 \leq \rho_{k,n} \leq 1}~ \mathcal{L}^{\text{P2}}_{k,n}(m_{k,n},\rho_{k,n},\lambda_k, \beta_n)
\end{align}
where $\mathcal{L}^{\text{P2}}_{k,n}(m_{k,n},\rho_{k,n},\lambda_k, \beta_n) \triangleq \rho_{k,n}\frac{e^{a\frac{m_{k,n}}{\rho_{k,n}}}-1}{f_{k,n}} - \lambda_km_{k,n}+\beta_n\rho_{k,n}$. Note that $\mathcal{L}^{\text{P2}}(\cdot) = \sum^{K}_{k=1}\sum^{N}_{n=1}\mathcal{L}^{\text{P2}}_{k,n}(\cdot) +\sum^{K}_{k=1}\lambda_kc_k-\sum^{N}_{n=1}\beta_n$.

\begin{lemma}\label{lemma:6}
The optimal solution of problem (P2-D) satisfies that $\boldsymbol\lambda^{*} > 0$ and $\boldsymbol\beta^{*} > 0$.
\end{lemma}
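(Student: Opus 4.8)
The plan is to run the same contradiction argument that proved Lemma~\ref{lemma:1}, now applied to the decoupled subproblems in (\ref{eq:P2 decomposed problem}). First I would record the structure of the inner minimizer of $\mathcal{L}^{\text{P2}}_{k,n}(\cdot)$ for fixed dual variables. Writing $z_{k,n}=m_{k,n}/\rho_{k,n}$, the subproblem becomes $\min_{z_{k,n}\ge 0,\,0\le\rho_{k,n}\le 1}\ \rho_{k,n}\big(\tfrac{e^{az_{k,n}}-1}{f_{k,n}}-\lambda_k z_{k,n}+\beta_n\big)$; the inner minimization over $z_{k,n}$ yields the water-filling level $z^{\star}_{k,n}=\tfrac1a\big(\ln\tfrac{\lambda_k f_{k,n}}{a}\big)^{+}$, \emph{independent of} $\beta_n$, with associated value $\phi_{k,n}(\lambda_k)\le 0$ which is strictly negative iff $\lambda_k f_{k,n}>a$. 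The remaining minimization over $\rho_{k,n}\in[0,1]$ then forces $\rho_{k,n}=1$ if $\phi_{k,n}(\lambda_k)+\beta_n<0$ and admits $\rho_{k,n}=0$ (with $m_{k,n}=0$) if $\phi_{k,n}(\lambda_k)+\beta_n\ge 0$. Since (P2) is convex and Slater's condition holds, strong duality gives that every primal optimum is simultaneously feasible for (P2) and a minimizer of $\mathcal{L}^{\text{P2}}(\cdot,\boldsymbol\lambda^{*},\boldsymbol\beta^{*})$; this is the lever for both parts.

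To show $\lambda^{*}_k>0$ for all $k$: suppose $\lambda^{*}_k=0$ for some $k$. Then, with $\lambda_k=0$ the partial derivative $\tfrac{a}{f_{k,n}}e^{am_{k,n}/\rho_{k,n}}$ of $\mathcal{L}^{\text{P2}}_{k,n}$ in $m_{k,n}$ is strictly positive, so $z^{\star}_{k,n}=0$ and hence $m^{*}_{k,n}=0$ for all $n$ in \emph{every} Lagrangian minimizer; consequently the primal optimum has $\sum_{n}m^{*}_{k,n}=0<c_k=\bar Q_k/T$, violating the data constraint (\ref{eq:P2 c2}) and contradicting primal feasibility. Hence $\lambda^{*}_k>0$, $\forall k$.

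To show $\beta^{*}_n>0$ for all $n$: suppose $\beta^{*}_n=0$ for some $n$, and use $\lambda^{*}_k>0$, $\forall k$, just established. Partition the MTs according to whether $\lambda^{*}_k f_{k,n}>a$. If at least two MTs $k_1\ne k_2$ satisfy this, then $\phi_{k_i,n}(\lambda^{*}_{k_i})+\beta^{*}_n=\phi_{k_i,n}(\lambda^{*}_{k_i})<0$ forces $\rho^{*}_{k_i,n}=1$ in every Lagrangian minimizer, so $\sum_k\rho^{*}_{k,n}\ge 2>1$, violating (\ref{eq:P2 c1}) — again a contradiction with primal feasibility. If at most one MT has $\lambda^{*}_k f_{k,n}>a$, then SC~$n$ is used by at most one MT at the optimum, and I would rule this out by a primal exchange argument: shifting an infinitesimal amount of rate of a heavily loaded MT from one of its active SCs onto the (fully or largely idle) SC~$n$, spread over the whole time fraction available there, strictly reduces total transmit power whenever that MT's water level $\lambda^{*}_k$ exceeds $a/f_{k,n}$ — a condition that holds for at least two MTs once the demands $\bar Q_k$ are strictly positive — contradicting optimality of the primal solution. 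In all cases we reach a contradiction, so $\beta^{*}_n>0$, $\forall n$.

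The main obstacle is this second branch of the $\boldsymbol\beta$ argument: the decoupled subproblem by itself does not exclude a dual optimum with $\beta^{*}_n=0$ when SC~$n$ happens to attract at most one MT, so one genuinely needs a primal-side (exchange) optimality argument, together with strict positivity of the data requirements, to conclude that every SC faces competing demand at the optimum and therefore carries a strictly positive time-sharing price. Everything else — the Lagrangian decomposition, the water-filling characterization of the inner minimizer, and the complementary-slackness/strong-duality bookkeeping — is a routine transcription of the corresponding steps in the proof of Lemma~\ref{lemma:1}.
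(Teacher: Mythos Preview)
The paper itself does not supply a detailed proof of this lemma; it simply defers to the argument of Lemma~\ref{lemma:1}. Your treatment of $\lambda^{*}_k>0$ is correct and is indeed the direct analogue of the corresponding step there: if $\lambda^{*}_k=0$ then the partial derivative of $\mathcal{L}^{\text{P2}}_{k,n}$ in $m_{k,n}$ is strictly positive, every Lagrangian minimizer has $m^{*}_{k,n}=0$ for all $n$, and primal feasibility fails because $c_k=\bar Q_k/T>0$.

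The gap is in your argument for $\beta^{*}_n>0$. In the branch where at most one MT satisfies $\lambda^{*}_kf_{k,n}>a$, you propose a primal exchange that strictly decreases power ``whenever that MT's water level $\lambda^{*}_k$ exceeds $a/f_{k,n}$ --- a condition that holds for at least two MTs once the demands $\bar Q_k$ are strictly positive.'' But this contradicts the very case hypothesis you are in: you have just assumed that \emph{at most one} MT has $\lambda^{*}_kf_{k,n}>a$, so you cannot then invoke the existence of two. Positive data requirements force each $\lambda^{*}_k>0$, but they do not force $\lambda^{*}_k>a/f_{k,n}$ on any particular SC~$n$; if $f_{k,n}$ is small for every $k$ (a deep fade on SC~$n$), no user's water level reaches that threshold, SC~$n$ is genuinely idle at the primal optimum, the constraint $\sum_k\rho_{k,n}\le 1$ is slack, and $\beta^{*}_n=0$ is then the dual optimizer in that coordinate (the subgradient of $g^{\text{P2}}$ in $\beta_n$ at $\beta_n=0$ equals $-1$ there). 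Your own diagnosis that this branch is ``the main obstacle'' is accurate, but the exchange argument as sketched is circular and does not close it; the conclusion $\beta^{*}_n>0$ for every $n$ genuinely requires an additional standing assumption---implicit in the paper---that every SC is active at the optimum, i.e., that $\max_k \lambda^{*}_kf_{k,n}>a$ for each $n$.
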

\begin{proof}
The proof is similar to that of Lemma \ref{lemma:1}, and thus is omitted for brevity.
\end{proof}
With Lemma \ref{lemma:6}, in the following, we only consider the case that $\boldsymbol\lambda > 0$ and $\boldsymbol\beta > 0$.
\begin{lemma}\label{lemma:3}
For a given pair of $n$ and $k$ with $\lambda_k > 0$ and $\beta_n > 0$, the optimal solution of problem (\ref{eq:P2 decomposed problem}) is given by
\begin{align}
m^{\star}_{k,n}(\lambda_k,\beta_n) & = \frac{\rho^{\star}_{k,n}(\lambda_k,\beta_n)}{a} \left(\ln\frac{\lambda_kf_{k,n}}{a}\right)^{+} \label{eq:P2 solution 1}\\
\rho^{\star}_{k,n}(\lambda_k,\beta_n) & = \left\{ \begin{array}{cl} \displaystyle
1 & o(\lambda_k, \beta_n) < 0 \\
0 & \mbox{otherwise }
\end{array}\right.\label{eq:P2 solution 2}
\end{align}
where $o(\lambda_k, \beta_n) = \left(\frac{\lambda_k}{a} - \frac{1}{f_{k,n}}\right)^{+} - \frac{\lambda_k}{a}\left(\ln\frac{\lambda_kf_{k,n}}{a}\right)^{+} + \beta_n$.
\end{lemma}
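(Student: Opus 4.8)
The plan is to exploit the separability and convexity of the per-subcarrier subproblem (\ref{eq:P2 decomposed problem}) and minimize in two stages: first over $m_{k,n}$ with $\rho_{k,n}>0$ held fixed, then over $\rho_{k,n}\in[0,1]$, treating the boundary $\rho_{k,n}=0$ by hand. Throughout we use $f_{k,n}>0$, $a>0$, and that $\lambda_k>0$, $\beta_n>0$ are given (as in Lemma \ref{lemma:6}).

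First I would reduce the inner problem by the ratio substitution $x=m_{k,n}/\rho_{k,n}\ge 0$, valid for $\rho_{k,n}>0$, under which
\begin{align}
\mathcal{L}^{\text{P2}}_{k,n}(m_{k,n},\rho_{k,n},\lambda_k,\beta_n)=\rho_{k,n}\,\phi_{k,n}(x),\qquad \phi_{k,n}(x)\triangleq \frac{e^{ax}-1}{f_{k,n}}-\lambda_k x+\beta_n .
\end{align}
Since $x\mapsto m_{k,n}=\rho_{k,n}x$ is a bijection of $[0,\infty)$ onto itself for fixed $\rho_{k,n}>0$, minimizing $\mathcal{L}^{\text{P2}}_{k,n}$ over $m_{k,n}\ge 0$ is the same as minimizing $\phi_{k,n}(x)$ over $x\ge 0$. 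The function $\phi_{k,n}$ is strictly convex with $\phi_{k,n}'(x)=\frac{a}{f_{k,n}}e^{ax}-\lambda_k$; setting this to zero gives the unconstrained minimizer $\frac1a\ln\frac{\lambda_k f_{k,n}}{a}$, and projecting onto $x\ge 0$ yields $x^{\star}_{k,n}=\frac1a\bigl(\ln\frac{\lambda_k f_{k,n}}{a}\bigr)^{+}$, hence $m^{\star}_{k,n}=\rho_{k,n}x^{\star}_{k,n}$, which is (\ref{eq:P2 solution 1}) once $\rho^{\star}_{k,n}$ is fixed. Substituting $x^{\star}_{k,n}$ back, I would verify by a short case split — $\lambda_k f_{k,n}>a$ (then $e^{ax^{\star}_{k,n}}=\lambda_k f_{k,n}/a$ and $\frac{\lambda_k}{a}-\frac{1}{f_{k,n}}>0$, so both $(\cdot)^{+}$ terms are active) versus $\lambda_k f_{k,n}\le a$ (then $x^{\star}_{k,n}=0$ and both $(\cdot)^{+}$ terms vanish) — that in either case $\phi_{k,n}(x^{\star}_{k,n})=o(\lambda_k,\beta_n)$. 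Thus $\min_{m_{k,n}\ge 0}\mathcal{L}^{\text{P2}}_{k,n}=\rho_{k,n}\,o(\lambda_k,\beta_n)$ for every fixed $\rho_{k,n}>0$.

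Next I would carry out the outer minimization. On $(0,1]$ the reduced objective $\rho_{k,n}\mapsto \rho_{k,n}\,o(\lambda_k,\beta_n)$ is linear, so it is minimized at $\rho_{k,n}=1$ with value $o(\lambda_k,\beta_n)$ when $o(\lambda_k,\beta_n)<0$, while its infimum is $0$ when $o(\lambda_k,\beta_n)\ge 0$. For $\rho_{k,n}=0$, the stated convention $\frac{m_{k,n}}{\rho_{k,n}}=0$ at $m_{k,n}=\rho_{k,n}=0$ together with the perspective-function blow-up $\rho_{k,n}\frac{e^{am_{k,n}/\rho_{k,n}}-1}{f_{k,n}}\to +\infty$ as $\rho_{k,n}\downarrow 0$ for any $m_{k,n}>0$ forces $m_{k,n}=0$, giving objective value $0$. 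Comparing the two regimes: if $o(\lambda_k,\beta_n)<0$ the minimum over $[0,1]$ is $o(\lambda_k,\beta_n)<0$, attained at $\rho^{\star}_{k,n}=1$; if $o(\lambda_k,\beta_n)\ge 0$ the minimum is $0$, attained at $\rho^{\star}_{k,n}=0$ (with $m^{\star}_{k,n}=0$). This is exactly (\ref{eq:P2 solution 2}), and substituting $\rho^{\star}_{k,n}$ into $m^{\star}_{k,n}=\rho^{\star}_{k,n}x^{\star}_{k,n}$ recovers (\ref{eq:P2 solution 1}).

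The main obstacle — and the step I would write out most carefully — is the boundary $\rho_{k,n}=0$: the ratio substitution is invalid there, so one must argue directly that this point contributes value $0$ and only with $m_{k,n}=0$, which is what makes it legitimate to read off $\rho^{\star}_{k,n}$ simply by minimizing the single linear expression $\rho_{k,n}\,o(\lambda_k,\beta_n)$ over the closed interval $[0,1]$. The case split verifying $\phi_{k,n}(x^{\star}_{k,n})=o(\lambda_k,\beta_n)$ is routine calculation, but it is precisely where the particular form of $o(\cdot,\cdot)$ with its two $(\cdot)^{+}$ terms comes from, so it is worth stating explicitly.
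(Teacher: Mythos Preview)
Your proposal is correct and follows essentially the same two-stage argument as the paper: treat $\rho_{k,n}=0$ separately (value $0$), then for $\rho_{k,n}>0$ optimize over $m_{k,n}$ first and reduce to a linear function of $\rho_{k,n}$ with slope $o(\lambda_k,\beta_n)$. The only cosmetic difference is that you factor $\mathcal{L}^{\text{P2}}_{k,n}=\rho_{k,n}\,\phi_{k,n}(m_{k,n}/\rho_{k,n})$ explicitly and minimize the univariate $\phi_{k,n}$, whereas the paper computes the two partial derivatives $\partial\mathcal{L}^{\text{P2}}_{k,n}/\partial m_{k,n}$ and $\partial\mathcal{L}^{\text{P2}}_{k,n}/\partial \rho_{k,n}$ and observes, after substituting the optimal ratio, that the latter equals the constant $o(\lambda_k,\beta_n)$; the resulting case split on the sign of $o$ and the handling of the boundary are identical.
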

\begin{proof}
First, consider the case of $\rho_{k,n} = 0$, in which $m_{k,n} = 0$ and $\frac{m_{k,n}}{\rho_{k,n}} = 0$. It follows that $\mathcal{L}^{\text{P2}}_{k,n}(\cdot) = 0$.

Second, consider the case of $\rho_{k,n} > 0$. Taking the derivative of $\mathcal{L}^{\text{P2}}_{k,n}(\cdot)$ over $m_{k,n}$ and $\rho_{k,n}$, respectively, we have
\begin{align}
\frac{\partial \mathcal{L}^{\text{P2}}_{k,n}}{\partial m_{k,n}} & = \frac{a}{f_{k,n}}e^{a\frac{m_{k,n}}{\rho_{k,n}}} - \lambda_k \label{eq:derivative 1}\\
\frac{\partial \mathcal{L}^{\text{P2}}_{k,n}}{\partial \rho_{k,n}} & = \frac{1}{f_{k,n}}e^{a\frac{m_{k,n}}{\rho_{k,n}}}\left(1 - a \frac{m_{k,n}}{\rho_{k,n}}\right) - \frac{1}{f_{k,n}} + \beta_n. \label{eq:derivative 2}
\end{align}
Then it is easy to see that given $\lambda_k > 0$ and $\beta_n > 0$, from (\ref{eq:derivative 1}), the optimal solution of problem (\ref{eq:P2 decomposed problem}) needs to satisfy the following equation:
\begin{align}\label{eq:P2 optimal condition}
m^{\star}_{k,n}(\lambda_k,\beta_n)= \frac{\rho^{\star}_{k,n}(\lambda_k,\beta_n) }{a}\left(\ln\frac{\lambda_kf_{k,n}}{a}\right)^{+}.
\end{align}
Substituting (\ref{eq:P2 optimal condition}) into (\ref{eq:derivative 2}), it then follows that $\frac{\partial \mathcal{L}^{\text{P2}}_{k,n}}{\partial \rho_{k,n}} = o(\lambda_k, \beta_n)$, which is a constant implying
\begin{align}\label{eq:solution for p2 derivative}
\rho^{\star}_{k,n}(\lambda_k,\beta_n) & = \left\{ \begin{array}{cl} \displaystyle
1 & \mbox{if } o(\lambda_k, \beta_n) < 0 \\
\left(0,1\right] & \mbox{if } o(\lambda_k, \beta_n) = 0 \\
\rightarrow 0 & \mbox{otherwise }
\end{array}\right.
\end{align}
where $\rightarrow 0$ means here that the optimal value cannot be attained but can be approached as $\rho^{\star}_{k,n}(\lambda_k,\beta_n) \rightarrow 0$. Then, substituting (\ref{eq:P2 optimal condition}) into $\mathcal{L}^{\text{P2}}_{k,n}(\cdot)$, it follows that $\mathcal{L}^{\text{P2}}_{k,n}(\cdot) = \rho^{\star}_{k,n}(\lambda_k,\beta_n)o(\lambda_k, \beta_n)$. Thus, (\ref{eq:solution for p2 derivative}) achieves the optimal value of $\mathcal{L}^{\text{P2}}_{k,n}(\cdot)$ as
\begin{align}
\mathcal{L}^{\text{P2}}_{k,n}(\cdot) & = \left\{ \begin{array}{cl} \displaystyle
o(\lambda_k, \beta_n) & \mbox{if } o(\lambda_k, \beta_n) < 0 \\
0 & \mbox{otherwise. }
\end{array}\right.
\end{align}

Combining the two cases above, Lemma \ref{lemma:3} is thus proved.
\end{proof}

With Lemma \ref{lemma:3}, we can solve the $NK$ subproblems in (\ref{eq:P2 decomposed problem}) and thus obtain $g(\boldsymbol\lambda, \boldsymbol\beta)$ with given $\boldsymbol\lambda > 0$ and $\boldsymbol\beta > 0$. Then, we solve problem (P2-D) by finding the optimal $\boldsymbol\lambda$ and $\boldsymbol\beta$ to maximize $g(\boldsymbol\lambda, \boldsymbol\beta)$. Although problem (P2-D) is convex, the dual function $g(\boldsymbol\lambda, \boldsymbol\beta)$ is not differentiable and as a result analytical expressions for its differentials do not exist. Hence, conventional methods with gradient based search, such as Newton method, cannot be applied for solving problem (P2-D). An alternative method is thus the ellipsoid method \cite{Boyd2}, which is capable of minimizing non-differentiable convex functions based on the so-called subgradient.\footnote{The subgradient of $g(\boldsymbol\lambda, \boldsymbol\beta)$ at given $\boldsymbol\lambda$ and $\boldsymbol\beta$ for the ellipsoid method can be shown to be $\sum_nm^{\star}_{k,n}(\lambda_k,\beta_n) - c_k$ for $\lambda_k$, $k=1,\cdots,K$ and $1 - \sum_k\rho^{\star}_{k,n}(\lambda_k,\beta_n)$ for $\beta_n$, $n = 1,\cdots,N$.}
Hence, the optimal solution of (P2-D) can be obtained as $\boldsymbol\lambda^{*}$ and $\boldsymbol\beta^{*}$ by applying the ellipsoid method.

After obtaining the dual solution $\boldsymbol\lambda^{*}$ and $\boldsymbol\beta^{*}$, we can substitute them into (\ref{eq:P2 solution 1}) and (\ref{eq:P2 solution 2}), and obtain the corresponding $\{m^{\star}_{k,n}\}$ and $\{\rho^{\star}_{k,n}\}$. However, notice that the obtained $\{m^{\star}_{k,n}\}$ and $\{\rho^{\star}_{k,n}\}$ may not necessarily be the optimal solution of problem (P2), denoted by $\{m^{*}_{k,n}\}$ and $\{\rho^{*}_{k,n}\}$, since they may not satisfy the constraints in (\ref{eq:P2 c1}) and (\ref{eq:P2 c2}). The reason is that when $o(\lambda^{*}_k, \beta^{*}_n) = 0$ for certain pairs of $n$ and $k$, the corresponding $\rho^{\star}_{k,n}$ can actually take any value within $[0,1]$ according to (\ref{eq:solution for p2 derivative}), each of which would result in a different $m^{\star}_{k,n}$ accordingly. Therefore, with $\boldsymbol\lambda^{*}$ and $\boldsymbol\beta^{*}$, we may obtain infinite sets of $\{m^{\star}_{k,n}\}$ and $\{\rho^{\star}_{k,n}\}$, some of which might not satisfy the constraints in (\ref{eq:P2 c1}) and/or (\ref{eq:P2 c2}) \cite{Rui06}. In such cases, a linear programming (LP) needs to be further solved to obtain a feasible optimal solution for problem (P2).

To be more specific, we first define the following two sets with given $\boldsymbol\lambda^{*}$ and $\boldsymbol\beta^{*}$:
\begin{align}
\mathcal{A}_1 & = \left\{(k,n)~|~ o(\lambda^{*}_k, \beta^{*}_n) \neq 0, \forall k,n\right\} \\
\mathcal{A}_2 & = \left\{(k,n)~|~ o(\lambda^{*}_k, \beta^{*}_n) = 0, \forall k,n\right\}.
\end{align}
From (\ref{eq:P2 solution 1}) and (\ref{eq:P2 solution 2}), we know that for any pair of $n$ and $k$ with $(k,n) \in \mathcal{A}_1$, the corresponding $m^{\star}_{k,n}$ and $\rho^{\star}_{k,n}$ can be uniquely determined, which implies
\begin{align}
m^{*}_{k,n} = m^{\star}_{k,n}, \rho^{*}_{k,n} = \rho^{\star}_{k,n}, \forall (k,n) \in \mathcal{A}_1.
\end{align}
The problem remains to find $m^{*}_{k,n}$ and $\rho^{*}_{k,n}$ with $(k,n) \in \mathcal{A}_2$. It is then observed that the optimal solution of problem (P2) needs to satisfy the following linear equations:
\begin{align}
m^{*}_{k,n} & = \frac{\rho^{*}_{k,n}}{a} \left(\ln\frac{\lambda^{*}_kf_{k,n}}{a}\right)^{+}, \forall k,n \label{eq:lp1}\\
\sum_{k}\rho^{*}_{k,n} &= 1, \forall n,~~  \sum_n m^{*}_{k,n} = c_k, \forall k \label{eq:lp2}
\end{align}
where (\ref{eq:lp1}) is due to (\ref{eq:P2 solution 1}), and (\ref{eq:lp2}) is due to Lemma \ref{lemma:6} and the complementary slackness conditions \cite{Boydbook} satisfied by the optimal solution of problem (P2). Therefore, $m^{*}_{k,n}$ and $\rho^{*}_{k,n}$ with $(k,n) \in \mathcal{A}_2$ can be found through solving the above linear equations by treating $m^{*}_{k,n}$ and $\rho^{*}_{k,n}$ with $(k,n) \in \mathcal{A}_1$ as given constants, which is a linear programming (LP) and can be efficiently solved. In summary, one algorithm for solving problem (P2) and its dual problem (P2-D) is given in Table \ref{table2} as follows.

For the algorithm given in Table \ref{table2}, the computation time is dominated by the ellipsoid method in steps 1)-3) and the LP in step 4). In particular, the time complexity of steps 1)-3) is of order $(K+N)^4$ \cite{Boyd2} , step 4) is of order $K^3N^3$ \cite{Boydbook}. Therefore, the time complexity of the algorithm in Table \ref{table2} is $\mathcal{O}(K^4+N^4+K^3N^3)$.

\begin{table}[ht]
\begin{center}
\caption{\textbf{Algorithm 2}: Algorithm for Solving Problem (P2) and (P2-D)} \vspace{0.2cm}
 \hrule
\vspace{0.3cm}
\begin{enumerate}
\item Initialize $\boldsymbol\lambda > \mathbf{0}$ and $\boldsymbol\beta > 0$.
\item {\bf Repeat:}
    \begin{itemize}
    \item[ a)] Obtain $\{m^{*}_{k,n}(\lambda_k,\beta_n)\}$ and $\{\rho^{*}_{k,n}(\lambda_k,\beta_n)\}$ using (\ref{eq:P2 solution 1}) and (\ref{eq:P2 solution 2}), respectively, with given $\boldsymbol\lambda$ and $\boldsymbol\beta$.
    \item[ b)] Compute the subgradient of $g(\boldsymbol\lambda, \boldsymbol\beta)$ and update $\boldsymbol\lambda$ and $\boldsymbol\beta$ accordingly using the ellipsoid method \cite{Boyd2}.
    \end{itemize}
\item {\bf Until }both $\boldsymbol\lambda$ and $\boldsymbol\beta$ converge to $\boldsymbol\lambda^{*}$ and $\boldsymbol\beta^{*}$, respectively, within a prescribed accuracy.
\item Determine $\{\{m^{\star}_{k,n}\}, \{\rho^{\star}_{k,n}\}\}$ with $\boldsymbol\lambda^{*}$ and $\boldsymbol\beta^{*}$. If it is feasible for problem (P2), set $\{\{m^{*}_{k,n}\}, \{\rho^{*}_{k,n}\}\} = \{\{m^{\star}_{k,n}\}, \{\rho^{\star}_{k,n}\}\}$; otherwise solve a LP to find $\{\{m^{*}_{k,n}\}, \{\rho^{*}_{k,n}\}\}$.
\end{enumerate}
\vspace{0.2cm} \hrule \label{table2} \end{center}
\end{table}

\section{Proof of Lemma \ref{lemma:4}}\label{appendix:proof lemma 4}
To show problem (TEMin-2) is convex, we need to prove that both $v(T)$ and $v(T)T+ P_{t,c}T$ are convex functions of $T$. Since $P_{t,c}T$ is linear in $T$, we only need to show the convexity of $v(T)$ and $v(T)T$.

First, we check the convexity of function $v(T)$, which is sufficient to prove that for any convex combination $T = \theta T_1 + (1-\theta)T_2$ with $T_1,T_2 > 0$ and $\theta \in (0,1)$, we have $v(T) \leq \theta v(T_1) + (1-\theta)v(T_2)$. Denote the optimal solution to problem (TEMin-1) with $T_1$ and $T_2$ as $\{\dot{p}^{*}_{k,n}\}$, $\{\dot{\rho}^{*}_{k,n}\}$ (termed Solution 1) and $\{\ddot{p}^{*}_{k,n}\}$, $\{\ddot{\rho}^{*}_{k,n}\}$ (termed Solution 2), respectively. Then we have
\begin{align}\label{eq:time sharing objective}
\theta v(T_1) + (1-\theta)v(T_2) & = \theta\sum^{K}_{k=1}\sum^{N}_{n=1}\dot{\rho}^{*}_{k,n}\dot{p}^{*}_{k,n} \nonumber \\
&+ (1-\theta)\sum^{K}_{k=1}\sum^{N}_{n=1}\ddot{\rho}^{*}_{k,n}\ddot{p}^{*}_{k,n}.
\end{align}

Next we construct another solution $\{\bar{p}^{*}_{k,n}\}$, $\{\bar{\rho}^{*}_{k,n}\}$ (termed Solution 3) of problem (TEMin-1) with given $T$, which is achieved by properly allocating power for each MT on each SC such that the average power consumption is the same as that with time sharing between Solution 1 and Solution 2. The details of constructing Solution 3 are given as follows:
\begin{align}
\bar{\rho}^{*}_{k,n} & = \frac{\dot{\rho}^{*}_{k,n}\theta T_1 + \ddot{\rho}^{*}_{k,n}(1-\theta)T_2}{\theta T_1 + (1-\theta)T_2} \\
\bar{p}^{*}_{k,n} & = \frac{\dot{p}^{*}_{k,n}\dot{\rho}^{*}_{k,n}\theta T_1 + \ddot{p}^{*}_{k,n}\ddot{\rho}^{*}_{k,n}(1-\theta)T_2}{\bar{\rho}^{*}_{k,n}[\theta T_1 + (1-\theta)T_2]}.
\end{align}
It can then be shown that
\begin{align}
\sum^{K}_{k=1}\bar{\rho}^{*}_{k,n} & = \frac{\theta T_1\sum^{K}_{k=1}\dot{\rho}^{*}_{k,n} + (1-\theta)T_2\sum^{K}_{k=1}\ddot{\rho}^{*}_{k,n}}{\theta T_1 + (1-\theta)T_2} \nonumber \\
& \leq \frac{\theta T_1 + (1-\theta)T_2}{\theta T_1 + (1-\theta)T_2} = 1 \\
\sum^{N}_{n=1}T\bar{\rho}^{*}_{k,n}\bar{r}^{*}_{k,n} & = \left(\theta T_1\sum^{N}_{n=1}\dot{\rho}^{*}_{k,n} + (1- \theta)T_2\sum^{N}_{n=1}\ddot{\rho}^{*}_{k,n}\right)\bar{r}^{*}_{k,n} \nonumber \\
& \geq \theta T_1\sum^{N}_{n=1}\dot{\rho}^{*}_{k,n}\dot{r}^{*}_{k,n} + (1- \theta)T_2\sum^{N}_{n=1}\ddot{\rho}^{*}_{k,n}\ddot{r}^{*}_{k,n} \nonumber \\
& \geq \theta\bar{Q}_k + (1-\theta)\bar{Q}_k = \bar{Q}_k \\
\sum^{K}_{k=1}\sum^{N}_{n=1}\bar{\rho}^{*}_{k,n}\bar{p}^{*}_{k,n} & = \frac{\theta T_1\sum^{K}_{k=1}\sum^{N}_{n=1}\dot{\rho}^{*}_{k,n}\dot{p}^{*}_{k,n}}{\theta T_1 + (1-\theta)T_2} \nonumber \\
& + \frac{(1-\theta)T_2\sum^{K}_{k=1}\sum^{N}_{n=1}\ddot{\rho}^{*}_{k,n}\ddot{p}^{*}_{k,n}}{\theta T_1 + (1-\theta)T_2} \nonumber \\
& \leq \sum^{K}_{k=1}\sum^{N}_{n=1}\dot{\rho}^{*}_{k,n}\dot{p}^{*}_{k,n} + \sum^{K}_{k=1}\sum^{N}_{n=1}\ddot{\rho}^{*}_{k,n}\ddot{p}^{*}_{k,n}
\end{align}
i.e., Solution 3 is feasible for problem (TEMin-1) with the given $T$, and also achieves the same objective value as that in (\ref{eq:time sharing objective}). Since Solution 3 is only a feasible solution for problem (TEMin-1) with given $T$, which is not necessary to be optimal, we have
\begin{align}
v(T) \leq \theta v(T_1) + (1-\theta)v(T_2).
\end{align}
The convexity of $v(T)$ is thus proved.

Similar arguments can be applied to verify the convexity of $v(T)T$; Lemma \ref{lemma:4} is thus proved.

\section{Proof of Lemma \ref{lemma:5}}\label{appendix:proof lemma 5}
First, we find the gradient of $v(T)$. Since $v(T)$ is differentiable, its gradient and subgradient are equivalent. We provide the definition of subgradient \cite{Boyd2} as follows. A vector $y \in \mathbf{R}^{n}$ is said to be the subgradient of function $q : \mathbf{R}^{n} \rightarrow \mathbf{R}$ at $x \in \mathbf{dom} ~ q$ if for all $z \in \mathbf{dom} ~ q$,
\begin{align}
q(z) \geq q(x) + y^{T}(z - x).
\end{align}

The dual function (\ref{eq:P2 dual function}) can be expressed as
\begin{align}
& g(\boldsymbol\lambda, \boldsymbol\beta) \nonumber \\
&= \inf\limits_{\{m_{k,n}\}, \{ \rho_{k,n}\}}\sum^{K}_{k=1}\sum^{N}_{n=1}\left(\rho_{k,n}\frac{e^{a\frac{m_{k,n}}{\rho_{k,n}}}-1}{f_{k,n}} - \lambda_km_{k,n}+\beta_n\rho_{k,n}\right)\nonumber \\
&+\frac{1}{T}\sum^{K}_{k=1}\lambda_k\bar{Q}_k-\sum^{N}_{n=1}\beta_n.
\end{align}
Then, we have
\begin{align}
& v(T) = \mathop{\mathtt{Max.}}\limits_{\boldsymbol\lambda \geq 0, \boldsymbol\beta \geq 0} g(\boldsymbol\lambda, \boldsymbol\beta) \\
& = \inf\limits_{\{m_{k,n}\}, \{\rho_{k,n}\}}\sum^{K}_{k=1}\sum^{N}_{n=1}\left(\rho_{k,n}\frac{e^{a\frac{m_{k,n}}{\rho_{k,n}}}-1}{f_{k,n}} - \lambda^{*}_k(T)m_{k,n}\right.\nonumber\\
&+\beta^{*}_n(T)\rho_{k,n}\Bigg)+\frac{1}{T}\sum^{K}_{k=1}\lambda^{*}_k(T)\bar{Q}_k-\sum^{N}_{n=1}\beta^{*}_n(T)
\end{align}
where $\{\lambda^{*}_k(T)\}$ and $\{\beta^{*}_n(T)\}$ is the optimal solution of problem (P2-D) with given $T > 0$. For any $T^{'} > 0$ and $T^{'} \neq T$, we have
\begin{align}
v(T^{'}) & = \mathop{\mathtt{Max.}}\limits_{\boldsymbol\lambda \geq 0, \boldsymbol\beta \geq 0} \inf\limits_{\{m_{k,n}\}, \{\rho_{k,n}\}}\sum^{K}_{k=1}\sum^{N}_{n=1}\left(\rho_{k,n}\frac{e^{a\frac{m_{k,n}}{\rho_{k,n}}}-1}{f_{k,n}}\right. \nonumber \\
& - \lambda_km_{k,n}+\beta_n\rho_{k,n}\Bigg)+\frac{1}{T^{'}}\sum^{K}_{k=1}\lambda_k\bar{Q}_k-\sum^{N}_{n=1}\beta_n \\
& \geq \inf\limits_{\{m_{k,n}\}, \{\rho_{k,n}\}}\sum^{K}_{k=1}\sum^{N}_{n=1}\left(\rho_{k,n}\frac{e^{a\frac{m_{k,n}}{\rho_{k,n}}}-1}{f_{k,n}} - \lambda^{*}_k(T)m_{k,n}\right.\nonumber \\
& +\beta^{*}_n(T)\rho_{k,n}\Bigg)+\frac{1}{T^{'}}\sum^{K}_{k=1}\lambda^{*}_k(T)\bar{Q}_k-\sum^{N}_{n=1}\beta^{*}_n(T) \\
& = v(T) + \left(\frac{1}{T^{'}} - \frac{1}{T}\right)\sum^{K}_{k=1}\lambda^{*}_k(T)\bar{Q}_k \\
& = v(T) + \left(-\frac{1}{T^2}\sum^{K}_{k=1}\lambda^{*}_k(T)\bar{Q}_k\right)(T - \frac{T^2}{T^{'}}) \\
& \geq v(T) + \left(-\frac{1}{T^2}\sum^{K}_{k=1}\lambda^{*}_k(T)\bar{Q}_k\right)(T^{'} - T)
\end{align}
where the last inequality is due to $\left(T - \frac{T^2}{T^{'}}\right) - (T^{'} - T) = (T^{'} - T)\left(\frac{T}{T^{'}} - 1\right) < 0$. Thus, the subgradient (gradient) of $v(T)$ is given by
\begin{align}
v^{'}(T) = -\frac{1}{T^2}\sum^{K}_{k=1}\lambda^{*}_k(T)\bar{Q}_k.
\end{align}

With the gradient of $v(T)$, Lemma \ref{lemma:5} can be easily verified.

\end{document}